\documentclass[preprint,12pt]{elsarticle}

\usepackage{graphicx}
\usepackage{amsmath,amssymb,amsfonts}
\usepackage{bm}
\usepackage[title]{appendix}%
\usepackage{xcolor}%
\usepackage{textcomp}%
\usepackage{manyfoot}%
\usepackage{booktabs}%
\usepackage{algorithm}%
\usepackage{algorithmicx}%
\usepackage{algpseudocode}%
\usepackage{listings}%
\usepackage{bbding}
\usepackage{amsthm}
\newtheorem{theorem}{Theorem}
\newtheorem{remark}{Remark}
\newtheorem{lemma}{Lemma}
\journal{Theoretical Computer Science}

\begin{document}

\begin{frontmatter}

%% Title, authors and addresses

%% use the tnoteref command within \title for footnotes;
%% use the tnotetext command for theassociated footnote;
%% use the fnref command within \author or \affiliation for footnotes;
%% use the fntext command for theassociated footnote;
%% use the corref command within \author for corresponding author footnotes;
%% use the cortext command for theassociated footnote;
%% use the ead command for the email address,
%% and the form \ead[url] for the home page:
%% \title{Title\tnoteref{label1}}
%% \tnotetext[label1]{}

\title{Constrained Distributed Heterogeneous Two-Facility Location Problems with Max-Variant Cost}

 \author{Xinru Xu\fnref{label1}}
\ead{xuxinru1207@stu.ouc.edu.cn}
 \author{Wenjing Liu\corref{cor1}\fnref{label1,label2}}
 \ead{liuwj@ouc.edu.cn}
 \author{Qizhi Fang\fnref{label1}}
 \ead{qfang@ouc.edu.cn}
 \author{Alexandros A. Voudouris\fnref{label3}}
 \ead{alexandros.voudouris@essex.ac.uk}

 \affiliation[label1]{
            addressline={School of Mathematical Sciences, Ocean University of China},
           city={Qingdao},
           postcode={266100},
%%             state={},
         country={China}}

         \affiliation[label2]{
            addressline={Laboratory of Marine Mathematics, Ocean University of China},
           city={Qingdao},
           postcode={266100},
%%             state={},
         country={China}}

 \affiliation[label3]{
            addressline={School of Computer Science and Electronic Engineering, University of Essex},
           city={Colchester},
           postcode={CO4 3SQ},
%%             state={},
         country={United Kingdom}}

\cortext[cor1]{Corresponding author.}

%% \ead[url]{home page}
%% \fntext[label2]{}
%% \cortext[cor1]{}
%% \affiliation{organization={},
%%             addressline={},
%%             city={},
%%             postcode={},
%%             state={},
%%             country={}}
%% \fntext[label3]{}

%% use optional labels to link authors explicitly to addresses:
 
%%
%% \affiliation[label2]{organization={},
%%             addressline={},
%%             city={},
%%             postcode={},
%%             state={},
%%             country={}}

%% Author affiliation
%\affiliation{organization={},%Department and Organization
%            addressline={}, 
%            city={},
 %           postcode={}, 
%            state={},
%            country={}}

%% Abstract
\begin{abstract}
%% Text of abstract
This paper investigates a constrained distributed heterogeneous two-facility location problem under the max-variant cost model. In this setting, a set of agents with private locations on the real line is partitioned into disjoint groups. The constraint stipulates that facilities must be situated within a given multiset of candidate locations, with the restriction that each candidate location can host at most one facility. Under the max-variant model, an agent’s individual cost is defined as the distance from their location to the farthest facility. Our objective is to design strategyproof distributed mechanisms that incentivize agents to report their locations truthfully while approximating social objectives. Such mechanisms operate in two stages: first, for each group, a pair of candidate locations is selected as representatives based solely on local reports; subsequently, the mechanism outputs two final facility locations from the set of all representatives. We focus on a class of deterministic strategyproof distributed mechanisms and establish constant lower and upper bounds on the distortion under four social objectives: Average-of-Average, Max-of-Max, Average-of-Max, and Max-of-Average costs.
\end{abstract}

%%Graphical abstract

%%Research highlights

%% Keywords
\begin{keyword}
Facility location \sep
Mechanism design \sep  Strategyproof \sep Distributed \sep Distortion

%% keywords here, in the form: keyword \sep keyword

%% PACS codes here, in the form: \PACS code \sep code

%% MSC codes here, in the form: \MSC code \sep code
%% or \MSC[2008] code \sep code (2000 is the default)

\end{keyword}

\end{frontmatter}

%% Add \usepackage{lineno} before \begin{document} and uncomment 
%% following line to enable line numbers
%% \linenumbers

%% main text
%%
\section{Introduction}
As a fundamental problem in combinatorial optimization, the facility location problem seeks to determine optimal facility placements under specific constraints to optimize a given social objective. In many practical settings, however, the social planner does not have direct access to agents' private information, such as their exact residential addresses. Instead, the planner must rely on information reported by the agents themselves. This creates a conflict of interest: while the planner aims for social optimality, individual agents may misreport their locations to reduce their own costs. The goal, therefore, is to design strategyproof mechanisms that ensure truthful reporting while (approximately) optimizing the social objective. This line of research on approximate mechanism design without money was  initiated by Procaccia and Tennenholtz~\cite{procaccia2013approximate} and has since led to a variety of models (see the survey by Chan et al.~\cite{chan2021mechanism}).

 In real-world scenarios, collective decision-making is often \textit{distributed}. In such processes, agents are divided into groups where each group first reaches a local decision independently of others; these local outcomes are then aggregated into a final collective decision. A typical example is the selection of national scholarship recipients at a university: each faculty first nominates its own candidates, and the final winners are chosen from this pool.  To analyze these complex dynamics, Filos-Ratsikas et al.~\cite{filos2020distortion} introduced the study of distributed social choice from a voting perspective, employing two-step mechanisms: each group elects a representative based on local preferences, and the mechanism then selects an overall winner from these representatives. To quantify the resulting efficiency loss, they extended the notion of \textit{distortion}, defined as the worst-case ratio between the social objective value achieved by a mechanism and the theoretical optimum. In follow-up work, Filos-Ratsikas and Voudouris~\cite{filos2021approximate} investigated a distributed single-facility location problem aimed at minimizing the social cost, which represents the sum of all agents' costs.

While classic models often assume that facilities can be placed anywhere in a metric space, many practical applications involve \textit{constrained} facility location. Feasible regions or the capacity of specific sites may be limited due to land-use regulations, environmental protection, or human factors. This has motivated research into mechanism design for facility location within restricted sets of locations.
%Feldman et al. \cite{feldman2016voting} studied the single-facility location setting in the context of voting embedded in a metric space, where the facility can only be built at a given set of candidate points of the line. 

 In this paper, we study a constrained distributed heterogeneous two-facility location problem. We assume each agent requires access to both facilities, and her individual cost is defined by the distance to the farther one, known as \textit{the max-variant cost}. This cost model reflects scenarios such as an express delivery outlet that must dispatch regular packages to an ordinary distribution center and cold-chain packages to a specialized one. Assuming the outlet has multiple vehicles traveling at the same speed, its total waiting time depends on the distance to the more distant center. Our work provides lower and upper  bounds on the distortion of strategyproof distributed mechanisms under four distinct social objectives.

\subsection{Our Results}
We study a constrained, distributed, and heterogeneous two-facility location problem. In this model, facilities can only be deployed within a predetermined set of candidate locations, with a capacity of at most one facility per site. We consider a set of agents with private locations on the real line, partitioned into disjoint groups. Once the two facilities are located, each agent’s individual cost is determined by the distance to the farther facility (the max-variant cost). Our objective is to design strategyproof distributed mechanisms that take agents' reported locations as input and output the final locations of the two facilities. Such mechanisms operate in two steps: first, for each group, the mechanism selects two candidate locations as representatives based solely on local reports; second, it chooses the final two facility locations from the aggregated pool of representatives.
Our mechanisms are evaluated under four social objectives: 
\begin{itemize}
\item \textbf{Average-of-Average cost}: The arithmetic mean of the average individual costs across all groups.
\item \textbf{Max-of-Max cost}: The maximum individual cost among all agents.
\item \textbf{Max-of-Average cost}: The maximum of the average individual costs of all groups.
\item \textbf{Average-of-Max cost}: The average of the maximum individual costs within each group.
\end{itemize}

While the Average-of-Average cost and the Max-of-Max cost are adaptations of classic objectives from non-distributed settings, the Max-of-Average cost and the  Average-of-Max cost  are  fairness-inspired metrics specifically meaningful in distributed environments. For each of these objectives, we establish lower and upper bounds on the distortion of strategyproof distributed mechanisms. A summary of our main results is provided in Table 1.
\begin{table}[H]\centering
\caption{Lower and upper  bounds on the distortion}\label{tab1}
\begin{tabular}{|l|c|c|}
\hline
Social objective & Lower bound & Upper bound  \\
\hline
Average-of-Average cost &   $3$ [Theorem \ref{AoAlower}]& $7$ [Theorem \ref{AoAupper}]\\
Max-of-Max cost & $3$ [Theorem \ref{MoMlower}]&  $3$ [Theorem \ref{MoMupper}]\\
Max-of-Average cost& $\frac{3+\sqrt{17}}{2}$ [Theorem \ref{MoAlower}]&$2+\sqrt{5}$ [Theorem \ref{MoAupper}]\\
Average-of-Max cost& $3$ [Theorem \ref{AoMlower}]& $2+\sqrt{5}$ [Theorem \ref{AoMupper}]\\
\hline
\end{tabular}  
\end{table}

\subsection{Related work}
The framework of approximate mechanism design without money was established by Procaccia and Tennenholtz \cite{procaccia2013approximate}, who developed strategyproof mechanisms with constant approximation ratios for facility location problems on the real line. Their foundational study covered both single-facility and homogeneous two-facility models under social cost and maximum cost objectives. Since then, the field has expanded significantly: Alon et al. \cite{alon2010strategyproof} explored facility location on circles and graphs, while Tang et al. \cite{tang2020mechanism} considered  constrained models where facilities are restricted to a discrete set of candidate locations. Other variations include obnoxious facility games, where agents seek to maximize their distance from the facility \cite{cheng2013strategy}, and studies focused on envy-based objectives \cite{cai2016facility,ding2020facility}. Below, we highlight the two research threads most relevant to our work: heterogeneous and distributed facility location.
%For minimizing the sum cost, they designed a Two-Extremes deterministic mechanism which is $(n-2)$- approximation and then Fotakis and Tzamos \cite{fotakis2014power} proved a matching lower bound of $n-2$, thus settling the sum cost objective of this setting. Further, Tang et al. \cite{tang2020mechanism} considered  constrained (or discrete) homogeneous two-facility location problems where the facilities can only be located in a given set of candidate locations. We will highlight the papers of heterogeneous two-facility location and distributed facility location problems respectively, which are most related to our work.

\paragraph{Heterogeneous facility location}
This line of research considers models where facilities may provide different utilities to agents. Zou and Li \cite{zou2015facility} investigated dual preferences where a facility could be either desirable or obnoxious. Serafino and Ventre \cite{serafino2016heterogeneous} introduced an optional preference model where each agent approves either one facility or both.  In their setting, agents’ locations are public while their preferences remain private, and the individual cost is defined as the sum of distances to the facilities of interest (referred to as sum-variant cost).  Later, Chen et al. \cite{chen2020facility} extended this optional preference model to include max-variant and min-variant costs, with the results for the latter subsequently refined by Li et al. \cite{li2021strategyproof}. Anastasiadis and
Deligkas \cite{anastasiadis2018heterogeneous} analyzed heterogeneous $k$-facility location problems with min-variant. In scenarios where agents' locations are private but preferences are public, Zhao et al. \cite{zhao2023constrained} explored the optional preference model with max-variant cost, while Kanellopoulos et al. \cite{kanellopoulos2025truthful} focused on the sum-variant cost. Additionally, Kanellopoulos and Voudouris \cite{kanellopoulos2025constrained} addressed obnoxious two-facility location with optional preferences, and Deligkas et al. \cite{deligkas2023heterogeneous} examined heterogeneous facility location under limited resources. Finally, Fong et al. \cite{fong2018facility} proposed a fractional preference model where an agent's preference for a facility is represented by a value between 0 and 1, a setting further studied by Fang and Liu \cite{fang2025heterogeneous} in the context of limited resources.

\paragraph{Distributed facility location} 
Filos-Ratsikas et al.~\cite{filos2020distortion} initiated the study of distortion in distributed social choice, which was further extended by Anshelevich et al. \cite{anshelevich2022distortion} and Voudouris \cite{Voudouris2023tight} to a distributed metric social choice setting where voters and alternatives are represented as points in a metric space. Voudouris \cite{voudouris2025metric} investigated the metric distortion of obnoxious distributed voting. Closely related to our work,  Filos-Ratsikas and Voudouris \cite{filos2021approximate} analyzed a distributed single-facility location problem under the social cost objective, considering both the discrete setting (where facilities are restricted to a finite set of candidate locations) and the continuous setting (where facilities can be placed at any point on the real line). For the discrete case, they established a tight bound of 7 for strategyproof distributed mechanisms, while for the continuous case, they derived a tight bound of 3.
Further, Filos-Ratsikas et al. \cite{filos2024distortion} explored a continuous distributed single-facility location problem under four social objectives: the average cost, the max cost, the average-of-max cost, and the max-of-average cost. A summary of their results is provided in Table 2.
\begin{table}[H]\centering
\caption{Tight bounds on the  distortion in Filos-Ratsikas et al. \cite{filos2024distortion}.}\label{tab1}
\begin{tabular}{|l|c|c|}
\hline
Social objective &  Tight bounds \\
\hline
Average cost &  $3$ \cite{filos2024distortion} \\
Max cost &  $2$ \cite{filos2024distortion} \\
Max-of-Average cost&$1+\sqrt{2}$ \cite{filos2024distortion} \\
Average-of-Max cost& $1+\sqrt{2}$ \cite{filos2024distortion} \\
\hline
\end{tabular}
\end{table}
%设施选址 Based on the distributed setting of Filos-Ratsikas et al. \cite{filos2024distortion}, we studied the distributed heterogeneous two-facility location problems.

\section{Preliminaries}
Let $N=\left\{1,2,...,n\right\}$ be a set of agents. Each agent $i\in N$ has a private location $x_i\in \mathbb{R}$ and denote by $\mathbf{x}=(x_1,...,x_n)\in \mathbb{R}^n$ the location profile of all agents. The agents are partitioned into $k$ ($\ge 1$) disjoint groups, denoted by the set $D=\left\{1,...,k\right\}$. For each group $d\in D$, let $N_d\subseteq N$ be the set of agents belonging to group $d$, and let $n_d=|N_d|$ represent the number of agents in that group.

Let $\mathcal F=\{ F_1,F_2\}$ be the two heterogeneous facilities to be located. Define $A=\left\{a_1,...,a_m\right\}\in \mathbb{R}^m$ as a multiset of candidate locations, with the constraint that at most one facility can be placed at each location in $A$. Without loss of generality, assume $a_1\le a_2\le\cdots\le a_m$. An instance of the problem is denoted by the tuple $I=(N,\mathbf{x},D,A)$. 

\paragraph{Distributed Mechanism} 
A distributed mechanism $M$ is a function that  maps an instance $I$ to a facility location profile $\mathbf{w}$ through two steps, i.e. $M(I)=\mathbf{w}=(w_1,w_2)$, where $w_1\in A$ is the location of $F_1$ and $w_2\in A\setminus\left\{w_1\right\}$ is the location of $F_2$. Specifically, for a given instance $I$, the mechanism $M$ operates in two steps: 
\begin{itemize}
    \item[$\bullet$] Step $1$. For each group $d\in D$, $M$ selects two representative locations $y_{d(1)}\in A$, $y_{d(2)}\in A\setminus \left\{y_{d(1)}\right\}$  based solely on the locations reported by agents in $N_d$.
    \item[$\bullet$] Step $2$. $M$ outputs a final facility location profile $\mathbf{w}=(w_1,w_2)$, where $w_1, w_2\in \{y_{d{(1)}},y_{d{(2)}}\}_{d \in D}$ and $w_1\neq w_2$, chosen from the set of all group representatives..
\end{itemize}

\begin{remark}
Following the distributed framework in Filos-Ratsikas et al. \cite{filos2024distortion},
Properties (P1) and (P2) represent the inherent structural characteristics of the mechanism, reflecting the fundamental information-restricted nature of the two-stage decision process:
\begin{itemize}
    \item \textbf{P1 (Local Aggregation)}: The representative of each group is determined solely by the positions of agents within that group.
    \item \textbf{P2 (Information Restriction)}: The final facility location profile must be a function of the group representatives, as the mechanism only has access to this summarized information in the second stage.
\end{itemize} 
These properties are also essential for deriving our distortion lower bounds.
\end{remark}

For any two points $x,y\in \mathbb{R}$, let $\delta(x,y)=|x-y|$ be the distance between $x$ and $y$. In our model, we assume that all agents approve the two heterogeneous facilities. The \textit{individual cost} of each agent $i$ for a facility location profile $\mathbf{w}$ is the distance to the farthest facility, which we refer to as the \emph{max-variant cost}:
\[c(x_i,\mathbf{w})=\max\left\{\delta(x_i,w_1),\delta(x_i,w_2)\right\}.
\]

\paragraph{Social Objectives} 
We consider four standard cost-minimization social objectives, defined as follows:
\begin{itemize}
\item[(1)] The \textit{Average-of-Average cost} of a facility location profile $\mathbf{w}$ is the arithmetic mean of the average individual costs across all groups:  
\[  
AoA(\mathbf{w}|I) = \frac{1}{k}\sum_{d \in D}\left\{\frac{1}{n_d} \sum_{i \in N_d} c(x_i, \mathbf{w}) \right\}. 
\]  

\item[(2)] The \textit{Max-of-Max cost} of a facility location profile  $\mathbf{w}$ is the maximum individual cost among all agents:  
\[  
MoM(\mathbf{w}|I) = \max_{d \in D} \max_{i \in N_d} c(x_i, \mathbf{w}).
\]

\item[(3)] The \textit{Max-of-Average cost} of a facility location profile  $\mathbf{w}$ is the maximum of the average individual costs of all groups:  
\[  
MoA(\mathbf{w}|I) = \max_{d \in D} \left\{\frac{1}{n_d}\sum_{i \in N_d} c(x_i, \mathbf{w})\right\}.
\]  

\item[(4)] The \textit{Average-of-Max cost} of a facility location profile  $\mathbf{w}$ is the average of the maximum individual costs within each group:  
\[  
AoM(\mathbf{w}|I) = \frac{1}{k}\sum_{d \in D} \max_{i \in N_d} c(x_i, \mathbf{w}).  
\] 
\end{itemize}

\paragraph{Strategyproofness} 
A mechanism $M$ is \textit{strategyproof }if each agent can never benefit by misreporting her location, regardless of the locations reported by the other agents, i.e., for every $i\in N$, for every $\mathbf{x}\in \mathbb{R}^n$,  and for every $x_i'\in \mathbb{R}$, it must hold that

\[  
c(x_i, M(N, (x_i, \mathbf{x}_{-i}), D, A) \leq c(x_i, M(N, (x_i', \mathbf{x}_{-i}), D, A))  ,
\]  
where \( \mathbf{x}_{-i} = (x_1, \ldots, x_{i-1}, x_{i+1}, \ldots, x_n) \) is the location profile of  $N\setminus\{i\}$.  

\paragraph{Distortion} 
In the distributed setting,  due to lack of global information and the requirement for strategyproofness of a mechanism, the facility location profile chosen by a distributed mechanism may be  suboptimal. Here, we adopt the notion of \textit{distortion} to quantify the performance gap between a mechanism and the optimal mechanism. Formally, the \textit{distortion} of a  distributed mechanism $M$ is the worst-case  ratio between the social objective value obtained by $M$ and the optimal social objective value, taken over all possible instances:
\[  
\text{dist}(M) = \sup_{I} \frac{\text{cost}(M(I) \mid I)}{\text{cost}(OPT(I) \mid I)}, 
\] 
where \(OPT(I)\) denotes the optimal solution for instance \(I\) and cost $\in\{AoA, $ $MoM, MoA, AoM\}$ represents the social objective function. For notational simplicity, we abbreviate $\text{cost}(M(I)|I)$ as $\text{cost}(M|I)$.

 For each agent $i\in N$, let $t({i})$ denote the closest candidate location to $i$ in the set $A$, and \(s(i) \) denote the closest candidate location to $i$ in $A\setminus \{t(i)\}$\footnote{In case of a tie, the left location is always chosen.}.  We now introduce a class of distributed mechanisms, termed the $(\alpha,\beta)$-Quantile Mechanism (with $\alpha,\beta\in [0,1]$).
 \vspace{1em}
 
\noindent\textit{\textbf{$\bm{(\alpha,\beta)}$-Quantile Mechanism}}\footnote{For the boundary cases: when $\alpha=0$, the $\lceil\alpha\cdot n_d\rceil$-th leftmost agent in $N_d$ is defined as the leftmost agent in $N_d$; similarly, when $\beta=0$, the $\lceil\beta\cdot k\rceil$-th leftmost location is defined as the leftmost location in the corresponding set.}
\begin{itemize}
    \item[$\bullet$] Step 1. For each group $d\in D$, let $\alpha_d$ be the $\lceil\alpha\cdot n_d\rceil$-th leftmost agent in $N_d$.
  Define $y_{d(1)}=t({\alpha_d})$ and $y _{d(2)}=s({\alpha_d})$ as the two representative locations for group $d$.
  \item[$\bullet$] Step 2. Initialize $z_d=y_{d(1)}$ for all $d\in D$ and let $w_1$ be the $\lceil\beta\cdot k\rceil$-th leftmost location in the set $\{z_d\}_{d\in D}$. For each group $d\in D$, update $z_d$ to $y_{d(2)}$ if $y_{d(1)}=w_1$, then set $w_2$ to be the $\lceil\beta\cdot k\rceil$-th leftmost location in the updated set $\{z_d\}_{d\in D}$.
\end{itemize}
 
\begin{remark}\label{remark2}
The $(\alpha,\beta)$-Quantile Mechanism naturally satisfies the following key properties.  For every group $d$, $y_{d(1)}$, $y_{d(2)}$ are consecutive candidate locations, which implies that the pairs $(y_{d(1)},y_{d(2)})$ and $(y_{d'(1)},y_{d'(2)})$ are never "interlaced" \footnote{Here, "interlaced" refers to the scenario where the representative pairs of two distinct groups overlap in order, i.e., $y_{d(1)}<y_{d'(1)}<y_{d(2)}<y_{d'(2)}$ for $d\neq d'$.} for any two distinct groups $d$ and $d'$. Consequently, the output locations $w_1$, $w_2$ of the mechanism must correspond to the two consecutive representative locations of a single group $d$, i.e., $w_1=y_{d(1)}$, $w_2=y_{d(2)}$.
    
   %\noindent (2) Without loss of generality, when there are several positions whose distances to $i_d$ are the same, the leftmost one prioritized.
\end{remark}

We will prove that the $(\alpha,\beta)$-Quantile Mechanism is strategyproof.

\begin{theorem}
The $(\alpha,\beta)$-Quantile Mechanism is strategyproof.
\end{theorem}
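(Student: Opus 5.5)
Fix an agent $i$ in group $d$. Because of (P1), a misreport by $i$ can only alter the representative pair $(y_{d(1)},y_{d(2)})$ of her own group $d$; all other pairs stay put. The proof therefore has two parts. First, describe how $i$ can move the pair of group $d$ by changing which agent is the quantile agent $\alpha_d$ and where it sits. Second, show that no such move can produce a final output $\mathbf{w}$ that $i$ prefers.

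\textbf{Step 1 (monotone control of the group pair).} If $i$ is not $\alpha_d$ and stays on the same side of $\alpha_d$, nothing changes. If she crosses over, the new quantile agent lies weakly farther from $x_i$ in the direction of the crossing. If $i=\alpha_d$, the quantile point itself moves to her report. In every case the reachable quantile positions $p$ lie on one side of the true position $p^*=x_{\alpha_d}$, on the far side from $x_i$ (the side opposite to $x_i$ when $i\neq\alpha_d$). The map $p\mapsto (t(p),s(p))$ is monotone: as $p$ moves right, the pair of two nearest consecutive candidates slides right, weakly. So every pair $i$ can reach is a consecutive pair lying, in this order sense, weakly farther from $x_i$ than the truthful pair. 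The one exception is the case $i=\alpha_d$, where any position is reachable. In that case I will show directly that truthful reporting already gives $i$ a globally optimal consecutive pair: the two candidates nearest to $x_i$ minimize $\max\{\delta(x_i,a),\delta(x_i,a')\}$ over distinct candidates. This holds because any two distinct candidates include one at least as far as $s(i)$.

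\textbf{Step 2 (effect through Step 2 of the mechanism).} By Remark \ref{remark2}, the output equals the pair of a single group, namely the group whose $z_d$ is the $\lceil\beta k\rceil$-th order statistic. Pairs of different groups are non-interlaced consecutive pairs, so they are totally ordered. The output is then the $\lceil\beta k\rceil$-th smallest pair in this order, up to ties, which the update rule resolves. Moving group $d$'s pair monotonically in one direction moves this order statistic weakly in the same direction. Hence the output moves weakly away from $x_i$'s side, or stays the same. For $i=\alpha_d$, the order-statistic pair is either her own pair, which is optimal for her, or another group's pair. Any change she makes can only push the selected pair outward from the side of her true pair.

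\textbf{Step 3 (cost comparison).} Finally, the max-variant cost $\max\{\delta(x_i,a_j),\delta(x_i,a_{j+1})\}$ is unimodal along the ordered sequence of consecutive pairs, with minimum at $i$'s nearest pair. This gives the result for every $i$: moving a consecutive pair away from the side containing $x_i$ cannot decrease $i$'s cost. Care is needed when $x_i$ lies between the truthful output and the deviated output; this is excluded by the direction argument of Step 1.

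The main obstacle is Step 2. The update rule (replace $z_d$ by $y_{d(2)}$ when $y_{d(1)}=w_1$) must be shown to realize exactly the ``$\lceil\beta k\rceil$-th pair'' order statistic, including when several groups share the same pair or share one endpoint. Ties are also delicate under the left-tie-breaking of $t,s$. I would handle this by first proving a lemma that the output equals the $\lceil\beta k\rceil$-th smallest element of the multiset of pairs, ordered by left endpoint. Monotonicity of order statistics then finishes the argument.
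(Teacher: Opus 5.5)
Your proposal is correct. It rests on the same intuition as the paper's proof, but organizes it differently.

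The paper takes the agent $j$ whose two nearest candidates form the output (via Remark~\ref{remark2}). It then splits into three cases on the order of $x_i$, $x_{\alpha_d}$, $x_j$, and asserts in each case that the output either stays put or becomes the nearest pair of some agent farther from $x_i$.

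You instead factor the mechanism into monotone pieces:
\begin{enumerate}
\item agent $i$ can only push her group's quantile away from herself;
\item the map $x\mapsto\{t(x),s(x)\}$ is monotone;
\item Step 2 returns the $\lceil\beta k\rceil$-th order statistic of the totally preordered consecutive pairs;
\item the max-variant cost is quasi-convex along consecutive pairs, with minimum at $\{t(x_i),s(x_i)\}$.
\end{enumerate}

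What this buys is explicitness. The paper uses the monotonicity of $t,s$ under left tie-breaking and the order-statistic reading of the update rule silently; you turn them into stated lemmas. Both do hold: the lex-order computation on $(y_{d(1)},y_{d(2)})$ confirms the update rule picks the $\lceil\beta k\rceil$-th pair. You also cover the case $i=\alpha_d$. The paper's phrasing, that agent $i$ ``must first become'' the quantile agent by reporting $x_i'>x_{\alpha_d}$, skips that case, since such an agent can also report to the left. The paper's version is shorter but leaves these steps to intuition.

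One step needs sharpening. Suppose the truthful output $O$ lies strictly on $x_i$'s near side of $P(x_i)=\{t(x_i),s(x_i)\}$. This situation is not excluded by your Step 1 alone, contrary to what Step 3 says. What rules it out is the following:
\begin{itemize}
\item your group's truthful pair satisfies $P(x_{\alpha_d})\ge P(x_i)>O$, so it lies strictly beyond the $\lceil\beta k\rceil$-th order statistic;
\item pushing an element that is strictly beyond an order statistic further out leaves that order statistic unchanged.
\end{itemize}
This is exactly what the paper's ``output unchanged'' cases encode. With that sentence added, your Steps 2 and 3 combine cleanly.
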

\begin{proof}
Consider any instance $I$, and let $\mathbf{w}=(w_1,w_2)$ be the facility location profile output by the mechanism. By the property of the mechanism established in Remark \ref{remark2}, $w_1$ and $w_2$ must be two consecutive candidate locations that are the closest and second-closest to some agent $j$, i.e., $w_1=t(x_j)$, $w_2=s(x_j)$.
  
Let $i$ be any agent in group $d$, and assume without loss of generality (w.l.o.g.) that $x_i \leq x_{\alpha_d}$ (the $\lceil \alpha \cdot n_d \rceil$-th leftmost agent in $N_d$). We analyze all possible orderings of $x_i, x_j, x_{\alpha_d}$ to show that agent $i$ has no incentive to misreport its location.

\medskip
\noindent
 \textbf{Case 1}: $x_i\le x_{\alpha_d}\le x_j$. To alter the mechanism's output, agent $i$ must first become the $\lceil \alpha \cdot n_d \rceil$-th leftmost agent in $N_d$, which requires reporting $x'_i > x_{\alpha_d}$. If $x_j\ge x_i' > x_{\alpha_d}$, the mechanism's output remains unchanged, so agent $i$ has no incentive to misreport. If $x_i' > x_j$, The output becomes the two consecutive locations closest to some agent $k$ with $x_k \geq x_j$. This cannot decrease agent $i$'s cost, so $i$ still has no incentive to misreport.

\medskip
\noindent
\textbf{Case 2}: $x_i<x_j< x_{\alpha_d}$. To alter the output, agent $i$ would need to report $x'_i > x_{\alpha_d}$ to become the $\lceil \alpha \cdot n_d \rceil$-th leftmost agent in $N_d$. However, this manipulation leaves the mechanism's output unchanged, so $i$ has no incentive to misreport.

\medskip
\noindent
\textbf{Case 3}: $x_j<x_i\le  x_{\alpha_d}$. This case is symmetric to Case 1, and agent $i$ has no incentive to misreport its location.
\end{proof}

In the subsequent sections, we analyze the upper bounds on the distortion of the $(\alpha, \beta)$-Quantile Mechanism under four social objective functions, by tuning the parameters $\alpha$ and $\beta$.

\paragraph{Notation} Let $l, r,$ and $m$ denote the leftmost, rightmost, and median agent in $N$, respectively. Similarly, let $l_d, r_d,$ and $m_d$ denote the leftmost, rightmost, and median agent in $N_d$, respectively. For a facility location profile $\mathbf{w} = (w_1, w_2)$ and any agent $i \in N$, let $w(x_i)$ denote the location in $\{w_1, w_2\}$ that is farther from agent $i$.

\section{Average-of-Average cost}

In this section, we study the Average-of-Average cost, defined as the arithmetic mean of the average individual costs across all groups. For the lower bound, we prove that the distortion of any strategyproof mechanism is at least $3-\epsilon$ for any $\epsilon>0$. For the upper bound, we analyze the $\left(\frac{1}{2}, \frac{1}{2}\right)$-Quantile Mechanism and show that it achieves a distortion of $7$, which is best possible among all $(\alpha,\beta)$-Quantile Mechanisms.  

\begin{theorem}\label{AoAlower}
For the Average-of-Average cost, the distortion of any strategyproof mechanism is at least $3-\epsilon$, for any $\epsilon>0$.
\end{theorem}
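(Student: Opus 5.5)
The plan is to adapt the distributed lower bound of~3 for the average cost of Filos-Ratsikas et al.~\cite{filos2024distortion} to our constrained heterogeneous setting. The device is to make the max-variant cost behave like a single-facility cost. I would use a candidate multiset consisting of tight pairs, e.g.\ $A=\{0,\epsilon',1,1+\epsilon'\}$, possibly with a third pair at $2,2+\epsilon'$ for a symmetric variant. Any sensible output is then one of the pairs, and an agent's max-variant cost is its distance to that pair up to an additive $\epsilon'$. Outputs mixing two pairs have cost at least as large as outputting the farther pair, so they only help the lower bound.

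\textbf{Step 1: groups at a single point choose their own pair.} First I would show that a group whose agents all sit at the same pair location must select that pair as its representatives. Otherwise the single-group instance ($k=1$) has optimum $O(\epsilon')$ and mechanism cost $\Omega(1)$, which gives unbounded distortion. This step uses only property P1.

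\textbf{Step 2: the tied group and the confusing profile.} Next I would take a tied group $T$ with half its agents at $0$ and half at $1$. By symmetry, assume w.l.o.g.\ that its representatives are the $0$-pair. Consider $g$ copies of $T$ together with $h$ unanimous groups at $1$. If Step~2 of the mechanism then outputs the $0$-pair, the AoA cost is about $\frac{g/2+h}{g+h}$, while opening the $1$-pair costs $\frac{g/2}{g+h}$. With $h=g$ the ratio tends to~$3$. By P2, the same representative profile arises when the $g$ tied groups are replaced by $g$ unanimous groups at $0$. So if the mechanism outputs the $1$-pair instead, we must exhibit an instance with that representative profile where the $1$-pair is bad.

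\textbf{Step 3, the main obstacle: forcing the wrong choice.} With equal numbers of unanimous groups on each side, the second case only gives ratio~$1$, so the $0$/$1$ asymmetry must be broken. I would first try the three-pair configuration $\{0,1,2\}$ with mirrored tied groups: $T$ over $\{0,1\}$, and its mirror $T'$ over $\{1,2\}$ obtained by reflecting about $1$. Choose $T$ and $T'$ so that each case of the Step~2 decision produces an instance in which the chosen pair is worse by a factor approaching~$3$. Concretely, each representative ends up in a group for which that pair is a poor local choice, while the other side consists of unanimous groups.

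If symmetry alone does not fix the tie-breaking, I would invoke strategyproofness on single-group instances. Agents are moved one at a time from $0$ to $1$ within $T$, and I would argue that the first switch of the representative cannot occur before the group becomes, up to one agent, balanced. This yields groups with a $(\frac12+\delta)$-majority at~$1$ whose representative is still the $0$-pair. Letting $\delta\to0$, $\epsilon'\to0$ and the group sizes grow gives distortion at least $3-\epsilon$.
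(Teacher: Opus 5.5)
Your Step~1 and the cost accounting in Step~2 are fine, and you identify the crux correctly. You need a \emph{weak} group represented by the $1$-pair, one for which the $0$-pair is nearly as good, to mirror the weak $0$-represented group $T$. Step~3 does not supply such a group.

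\begin{itemize}
\item The mirrored three-pair configuration is not worked out. A non-symmetric mechanism has no reason to treat $T'$ as the mirror of $T$.
\item The fallback rests on a claim that strategyproofness does not give. When agents move one at a time from $0$ to $1$, strategyproofness only makes the representative monotone: once it leaves the $0$-pair it never returns. It does not force the switch to occur at balance. For example, the $(0.6,\beta)$-Quantile Mechanism is strategyproof and switches to the $1$-pair once about $40\%$ of the agents are at $1$. On that single-group instance it pays $0.6$ against an optimum of $0.4$, so small single-group distortion does not rule it out either.
\item Even granting the claim, you only obtain more weak $0$-represented groups. Paired with unanimous groups on the other side, your two instances give ratios $1+2t$ and $1/t$ with $t=h/g$, so at best $\max_t\min\{1+2t,1/t\}=2$.
\item You also silently exclude mixed representatives for $T$. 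A mixed pair costs $T$ about $1$ against an optimum of about $\frac12$, which a mechanism with distortion below $3$ may well use, and ``by symmetry'' does not cover this case.
\end{itemize}

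The missing idea is to leave the candidate points and place agents near the midpoint, where strategyproofness pins down the representative. Take $A=\{0,0,1,1\}$, which is allowed since $A$ is a multiset; your tight pairs would also work.

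\begin{itemize}
\item A single agent at $\frac12-\theta$ pays $\frac12-\theta$ under $(0,0)$ and at least $\frac12+\theta$ under any other output. Reporting $0$ yields $(0,0)$ by your Step~1, so strategyproofness forces her group to be represented by $(0,0)$.
\item Symmetrically, a single agent at $\frac12+\theta$ is represented by $(1,1)$.
\item The two-group instances $\{\frac12-\theta\},\{1\}$ and $\{0\},\{\frac12+\theta\}$ therefore have the same representative profile by P1, and hence the same output by P2.
\item Their optima are $(1,1)$ and $(0,0)$ respectively, and mixed outputs are worse still. Whatever the mechanism outputs, one of the two instances has ratio at least $\frac{3/2-\theta}{1/2+\theta}\to 3$.
\end{itemize}

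This is the paper's proof. It needs no tied groups and no threshold argument.
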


\begin{proof}
We proceed by contradiction. Assume there exists a strategyproof mechanism $M$ with distortion strictly smaller than $3 - \epsilon$ for some $\epsilon > 0$, and let $\theta > 0$ be an infinitesimal. We derive a contradiction by analyzing the following instances, with the set of candidate locations $A = \{0, 0, 1, 1\}$.

\medskip
\noindent
\textbf{(Instance $I_1$)}
Consider an instance $I_1$ with a single group containing one agent located at $x^1_1 = 0$.
By strategyproofness, $M$ must output $(0,0)$ as the group's representative, i.e., $M(I_1) = (0,0)$; otherwise, the distortion of $M$ would be unbounded.

\medskip
\noindent
\textbf{(Instance $I_2$)}
Consider an instance $I_2$ with a single group containing one agent located at $x^2_1 = \frac{1}{2} - \theta$.
If $M(I_2) \neq (0,0)$, then $c(x^2_1, M(I_2)) = \frac{1}{2} + \theta$.
Since $c(x^2_1, M(I_1)) = \frac{1}{2} - \theta$, the agent could strictly reduce her cost by misreporting her location as $0$, violating strategyproofness.
Thus, $M(I_2) = (0,0)$.

\medskip
\noindent
\textbf{(Instance $I_3$)}
Consider an instance $I_3$ with a single group containing one agent located at $x^3_1 = 1$.
By the same argument as in $I_1$, we obtain $M(I_3) = (1,1)$.

\medskip
\noindent
\textbf{(Instance $I_4$)}
Consider an instance $I_4$ with a single group containing one agent located at $x^4_1 = \frac{1}{2} + \theta$.
By the same argument as in  $I_2$, we conclude that $M(I_4) = (1,1)$.

\medskip
\noindent
\textbf{(Instance $I_5$)} Consider an instance $I_5$ with two groups:

\begin{itemize}  
    \item In group 1, there is a single agent located at $\frac{1}{2} - \theta$.
    \item In group 2, there is a single agent located at $1$.
\end{itemize} 

By $M(I_2) = (0,0)$ and $M(I_3) = (1,1)$, $(P1)$ implies that the representatives of group 1 and group 2 in $I_5$ are $(0,0)$ and $(1,1)$, respectively.  Since $AoA(0,0)=\frac{3}{4}-\frac{\theta}{2}$, $AoA(0,1)=AoA(1,0)=\frac{3}{4}+\frac{\theta}{2}$ and $AoA(1,1)=\frac{1}{4}+\frac{\theta}{2}$, Thus, $M$ must output $(1,1)$ as the global facility location profile, i.e., $M(I_5) = (1,1)$. Otherwise, the distortion of $M$ is at least $\frac{\frac{3}{2}-\theta}{\frac{1}{2}+\theta}\ge3-\epsilon$, for $\theta\le\frac{\epsilon}{8-2\epsilon}$.

\medskip
\noindent
\textbf{(Instance $I_6$)} We now derive a contradiction by considering the following instance $I_6$ with two groups:
\begin{itemize}
    \item In group 1, there is a single agent located at $\frac{1}{2} + \theta$.
    \item In group 2, there is a single agent located at $0$.
\end{itemize}

By $M(I_4) = (1,1)$ and $M(I_1) = (0,0)$, $(P1)$ implies that the representatives of group 1 and group 2 in $I_6$ are $(1,1)$ and $(0,0)$, respectively.
Since the group representatives in $I_5$ and $I_6$ are identical, $(P2)$ guarantees that $M(I_6) = (1,1)$.
Since $AoA(0,0)=\frac{1}{4}+\frac{\theta}{2}$, $AoA(0,1)=AoA(1,0)=\frac{3}{4}+\frac{\theta}{2}$ and $AoA(1,1)=\frac{3}{4}-\frac{\theta}{2}$, the distortion of $M$ is at least $\frac{\frac{3}{2}-\theta}{\frac{1}{2}+\theta}\ge3-\epsilon$, for $\theta\le\frac{\epsilon}{8-2\epsilon}$. This directly contradicts the assumption that $M$ achieves a distortion strictly smaller than $3 - \epsilon$.
\end{proof}

We next present the $\left(\frac{1}{2}, \frac{1}{2}\right)$-Quantile Mechanism and prove that its distortion is at most $7$.

\vspace{1em}
\noindent \textbf{$\bm{\left(\frac{1}{2}, \frac{1}{2}\right)}$-Quantile Mechanism} 
\begin{itemize}
    \item[$\bullet$] Step 1. For each group $d\in D$ , set $y_{d(1)}=t({m_d})$, $y _{d(2)}=s({m_d})$.
    \item[$\bullet$] Step 2. Initialize $z_d=y_{d(1)}$ for all $d\in D$ and let $w_1$ be the median location in the set $\{z_d\}_{d\in D}$. For each group $d\in D$, update $z_d$ to $y_{d(2)}$ if $y_{d(1)}=w_1$, then set $w_2$ to be the median location in the updated set $\{z_d\}_{d\in D}$.
\end{itemize}

\begin{theorem}\label{AoAupper}
For the Average-of-Average cost, the distortion of the $\left(\frac{1}{2}, \frac{1}{2}\right)$-Quantile Mechanism is at most $7$.
\end{theorem}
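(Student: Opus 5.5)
Let $\mathbf{w}=(w_1,w_2)$ be the output of the $\left(\frac12,\frac12\right)$-Quantile Mechanism and $\mathbf{o}=(o_1,o_2)$ an optimal profile. By Remark~\ref{remark2}, $w_1=y_{d^*(1)}=t(m_{d^*})$ and $w_2=y_{d^*(2)}=s(m_{d^*})$ for some group $d^*$. I would first verify the converse fact: $w_1$ is the median of $\{y_{d(1)}\}_d$, so at least half of the groups $d$ satisfy $y_{d(1)}\le w_1$ and at least half satisfy $y_{d(1)}\ge w_1$. Next, for any agent $i$ and any point $p\in\{w_1,w_2\}$, the triangle inequality gives
\[
c(x_i,\mathbf{w})\le \delta(x_i,z)+\max\{\delta(z,w_1),\delta(z,w_2)\}
\]
for any reference point $z$. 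The plan is to take $z=x_{m_{d^*}}$, or more usefully the location of a pivotal group median, and bound everything in terms of $\mathbf{o}$.

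\textbf{Local step.} For each group $d$, since $y_{d(1)},y_{d(2)}$ are the two closest candidates to $x_{m_d}$ and $o_1\ne o_2$ are two distinct candidates,
\[
\max\{\delta(x_{m_d},y_{d(1)}),\delta(x_{m_d},y_{d(2)})\}\le \max\{\delta(x_{m_d},o_1),\delta(x_{m_d},o_2)\}=c(x_{m_d},\mathbf{o}).
\]
Because $c(\cdot,\mathbf{o})$ is the distance to the farther of two points, and is therefore convex on the line, a median argument in each group gives
\[
\frac{1}{n_d}\sum_{i\in N_d}\delta(x_i,x_{m_d})\le \frac{1}{n_d}\sum_{i\in N_d}\delta(x_i,o_j)\le \frac{1}{n_d}\sum_{i\in N_d}c(x_i,\mathbf{o}),
\]
and $c(x_{m_d},\mathbf{o})\le 2\cdot\frac{1}{n_d}\sum_{i\in N_d}c(x_i,\mathbf{o})$. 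The second bound holds because at least half of the agents of group $d$ lie on the far side of $x_{m_d}$ from whichever $o_j$ is farther, and each such agent has $c(x_i,\mathbf{o})\ge c(x_{m_d},\mathbf{o})$. Writing $C_d=\frac1{n_d}\sum_{i\in N_d} c(x_i,\mathbf{o})$, we have $AoA(\mathbf{o})=\frac1k\sum_d C_d$.

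\textbf{Global step.} For an agent $i$ in group $d$,
\[
c(x_i,\mathbf{w})\le \delta(x_i,x_{m_d})+\delta(x_{m_d},\mu)+\max_j\delta(\mu,w_j),
\]
where $\mu$ is a reference point near $w_1$. The mechanism selects the pair of the group whose first representative is the median. So with $\mu=x_{m_{d^*}}$, the last term is at most $c(x_{m_{d^*}},\mathbf{o})$. The middle term is then averaged over groups: since $y_{d^*(1)}$ is the median of the $y_{d(1)}$, for a point $\mu$ on the line, $\frac1k\sum_d\delta(x_{m_d},\mu)$ is controlled by $\frac1k\sum_d\delta(x_{m_d},o)$ plus a constant multiple of $\delta$-errors $\delta(x_{m_d},y_{d(1)})\le c(x_{m_d},\mathbf{o})$. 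This is the usual "median of noisy medians" argument, giving a $3$-type factor. The distance $\max_j \delta(\mu,w_j)$ for the chosen group must be charged to the average over the half of the groups on the appropriate side. Each group on the correct side has $c(x_{m_d},\mathbf{o})$ at least a comparable quantity, so this term is bounded by $\frac{2}{k}\sum_d c(x_{m_d},\mathbf{o})$.

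\textbf{Assembling and the main obstacle.} Summing the pieces gives
\[
AoA(\mathbf{w})\le \frac1k\sum_d\Big(C_d+a\cdot c(x_{m_d},\mathbf{o})\Big)\le (1+2a)\,AoA(\mathbf{o}),
\]
with target $a=3$, yielding $7$. The step I expect to be hardest is the global one: bounding the cost the far facility of the chosen pair imposes on the half of the groups lying on the wrong side of $w_1$, while paying only $3\,c(x_{m_d},\mathbf{o})$ per group. I would first try to case-split on whether $\mathbf{o}$ lies left of, straddles, or lies right of $w_1$, using the fact that $\{w_1,w_2\}$ are consecutive candidates, so no optimal facility lies strictly between them. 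In the straddling case the bound is direct. In the one-sided case I would use the half of the groups with $y_{d(1)}$ on the opposite side, which pay at least $\delta(w_1,\text{far } o_j)$.
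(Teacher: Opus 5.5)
\paragraph{Verdict: the global step has a genuine gap.}
Your local step is correct. Your assembly also correctly reduces the theorem to showing
$\frac1k\sum_{d\in D}\delta(x_{m_d},\mu)+\max_j\delta(\mu,w_j)\le 3\cdot\frac1k\sum_{d\in D}c(x_{m_d},\mathbf{o})$;
after that, $c(x_{m_d},\mathbf{o})\le 2C_d$ gives $1+2\cdot 3=7$. But this inequality is exactly what you leave unproved, and the route you sketch does not give the constant $3$.

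For an arbitrary group $d^*$ whose pair is output, $\mu=x_{m_{d^*}}$ need not be a median of the points $x_{m_d}$. Several groups can share the pair $(w_1,w_2)$, with medians spread over the part of the cell of $w_1$ in which $w_2$ is the second-closest candidate. Both of your bounds need $\mu$ to be such a median: the middle term by $\frac1k\sum_d\delta(x_{m_d},o_j)$, and the last term by charging to ``half of the groups on the appropriate side''. The correction you propose instead is a constant multiple of the errors $\delta(x_{m_d},y_{d(1)})\le c(x_{m_d},\mathbf{o})$. It adds to the factor $2$ for the last term and yields $a>3$, i.e.\ a bound above $7$. So $a=3$ is asserted, not derived.

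The fallback case split also fails as sketched. Suppose the optimal facilities lie left of $w_1<w_2$. Agents whose farther facility is $w_2$ then have excess cost up to $\delta(o,w_2)$. The only uniform lower bound for groups with $y_{d(1)}\ge w_1$ is about $\delta(o,w_1)/2$ (think of groups whose pair is $w_1$ and its left neighbour), and this can be arbitrarily smaller. You must instead count the groups whose \emph{updated} representative $z_d$ is weakly right of $w_2$. Step 2 guarantees that at least half of the groups are of this kind, and their medians lie weakly right of $(o+w_2)/2$. Also, opposite-side groups pay at least \emph{half} of the relevant distance, not the full $\delta(w_1,\cdot)$.

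\paragraph{The missing idea.}
Your architecture goes through, with no case analysis at all, if you take $\mu$ to be the $\lceil k/2\rceil$-th leftmost group median. The map $x\mapsto t(x)$ is nondecreasing. So is the updated representative ($s(x)$ if $t(x)=w_1$, and $t(x)$ otherwise), because inside the cell of $w_1$ the second choice switches from the left to the right neighbour only once. Hence $w_1=t(\mu)$ and $w_2=s(\mu)$. This gives
$c(\mu,\mathbf{w})\le c(\mu,\mathbf{o})\le\frac2k\sum_d c(x_{m_d},\mathbf{o})$
and
$\frac1k\sum_d\delta(x_{m_d},\mu)\le\frac1k\sum_d\delta(x_{m_d},o_1)\le\frac1k\sum_d c(x_{m_d},\mathbf{o})$,
which is exactly $a=3$.

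\paragraph{Comparison with the paper.}
This would be a genuinely different proof from the paper's. The paper first reduces to equal group sizes and then argues over all agents at once. It takes an optimal facility $o$ to the right of $w_2$ or to the left of $w_1$. Agents on one side of a threshold pay no more than $\delta(x_i,o)$, and the others are charged $\delta(w_1,o)$ or $\delta(o,w_2)$. It then counts at least $n/4$ agents on the good side. The repaired version of your argument avoids both the equal-size reduction and the case split, and it makes explicit where Step 2 of the mechanism is used.
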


To prove Theorem~\ref{AoAupper}, we will need the following lemma which argues that it suffices to focus on instances with groups of equal size. 

\begin{lemma}\label{lem:equal-size-reduction}
For every instance $I$, there exists an instance $I'$ in which all groups have the same number of agents such that:
\begin{enumerate}
    \item For every facility location profile $\mathbf{w}$,
    $AoA(\mathbf{w}|I') = AoA(\mathbf{w}|I)$.
    \item The $(\alpha,\beta)$-Quantile Mechanism yields the same result for $I'$ as for $I$.
    
\end{enumerate}
\end{lemma}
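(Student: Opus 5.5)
The plan is to replicate agents. Let $L=\operatorname{lcm}(n_1,\dots,n_k)$, or simply $L=\prod_{d\in D} n_d$. Build $I'$ with the same group structure $D$ and the same candidate multiset $A$. In $I'$, each agent $i\in N_d$ is replaced by $L/n_d$ copies, all located at $x_i$ and all in group $d$. Every group of $I'$ then has exactly $L$ agents.

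For property 1, fix any profile $\mathbf{w}$. In group $d$ of $I'$ the average cost is
\[
\frac{1}{L}\sum_{i\in N_d}\frac{L}{n_d}\,c(x_i,\mathbf{w})=\frac{1}{n_d}\sum_{i\in N_d}c(x_i,\mathbf{w}),
\]
which is the same as in $I$. Since $k$ is unchanged, averaging over groups gives $AoA(\mathbf{w}|I')=AoA(\mathbf{w}|I)$.

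For property 2, Step 2 of the $(\alpha,\beta)$-Quantile Mechanism depends only on the representatives $(y_{d(1)},y_{d(2)})$ and on $k$. It therefore suffices to show that, in every group, the $\lceil \alpha L\rceil$-th leftmost agent of $I'$ sits at the same location as the $\lceil\alpha n_d\rceil$-th leftmost agent of $I$. Once that holds, $t(\cdot)$ and $s(\cdot)$, with the left tie-breaking rule, give identical representatives, and Step 2 runs identically.

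\textbf{Main obstacle: showing the quantile location is preserved.} Sort the agents of $N_d$ as $x_{(1)}\le\dots\le x_{(n_d)}$ and let $c=L/n_d$. In $I'$, the positions $(j-1)c+1,\dots,jc$ all carry location $x_{(j)}$, so the $p$-th leftmost agent of $I'$ sits at $x_{(\lceil p/c\rceil)}$. Take $p=\lceil\alpha L\rceil = \lceil \alpha c n_d\rceil$. Because $c$ is a positive integer,
\[
\left\lceil \frac{\lceil \alpha c n_d\rceil}{c}\right\rceil=\lceil \alpha n_d\rceil .
\]
This is the standard nested-ceiling identity $\lceil\lceil y\rceil/c\rceil=\lceil y/c\rceil$ for integer $c\ge1$, applied with $y=\alpha c n_d$. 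So the selected location is $x_{(\lceil\alpha n_d\rceil)}$, as required.

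The boundary case $\alpha=0$ needs separate handling because of the footnote convention. There the selected agent is the leftmost one in both instances, so it is again $x_{(1)}$. Ties among equal locations are harmless, since only the location of the selected agent matters, not its identity.
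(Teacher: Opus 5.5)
Your proposal is correct and follows essentially the same route as the paper: replicate each agent $L/n_d$ times with $L=\mathrm{lcm}(n_d)_{d\in D}$, check that group averages are unchanged, and show that the $\lceil\alpha L\rceil$-th agent of group $d$ falls in the block of copies of the $\lceil\alpha n_d\rceil$-th original agent. The paper verifies the block membership directly via $c_d(j_d-1)<\alpha c_d n_d\le\lceil\alpha c_d n_d\rceil\le c_d j_d$, which is the same content as your nested-ceiling identity; your explicit handling of the $\alpha=0$ convention and of ties among co-located agents is a small point the paper leaves implicit.
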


\begin{proof}
Given an instance $I$, let $L := \text{lcm}(n_d)_{d \in D}$ be the least common multiple of the group sizes, and 
$c_d := \frac{L}{n_d}$; clearly, $c_d$ is an integer. 
We construct an instance $I'$ with equal-size groups by replacing every agent in group $d$ of $I$ with $c_d$ identical copies located at the same point. Then each group $d'$ in $I'$ has size $n_{d'} = c_d n_d = L$.

For the first part of the statement, given a facility location profile $\mathbf{w}$, the average cost of any group $d$ in $I$ is
$$\frac{1}{n_d} \sum_{i \in d} \delta(x_i,w(x_i)).$$
For the corresponding group $d'$ in $I'$, since each agent is replicated $c_d$ times, the average becomes
$$
\frac{1}{c_d n_d} \sum_{i \in d} c_d \cdot \delta(x_i,w(x_i)) 
= \frac{1}{n_d} \sum_{i \in g} \delta(x_i,w(x_i)).
$$
Therefore, each group’s average cost is unchanged, and the average-of-average cost remains unchanged as well. 

For the second part, fix a group $d$ in $I$, and sort the positions of the agents therein in non-decreasing order:
$$
x^{(d)}_1 \le x^{(d)}_2 \le \cdots \le x^{(d)}_{n_d}.
$$
Let $j_d := \lceil \alpha n_d \rceil$ and $j_{d'} := \lceil \alpha L \rceil = \lceil \alpha c_d n_d \rceil$.
By construction, in $d'$, $x^{(d)}_i$ appears $c_d$ times consecutively, and thus the copies of $x^{(d)}_{j_d}$ occupy the positions
$c_d(j_d - 1) + 1, \dots, c_d j_d$.
Observe that 
$$j_{d'} = \lceil \alpha c_d n_d \rceil \le c_d \lceil \alpha n_d \rceil = c_d j_d,$$
and, since $j_d - 1 < \alpha n_d$,
$$c_d(j_d - 1) < \alpha c_d n_d \le j_{d'}.$$
Hence, $j_{d'} \in \{c_d(j_d - 1) + 1, \dots, c_d j_d\}$, and the mechanism chooses the same representatives for $d'$ in $I'$ as for $d$ in $I$. 
Since the representatives are identical, the second step of the mechanism considers the same multiset of locations, and thus the mechanism outputs the same facility location profile. 

\end{proof}

Given Lemma~\ref{lem:equal-size-reduction}, it suffices to bound the distortion of any $(\alpha,\beta)$-Quantile Mechanism on instances with equal-size groups. Using this, we prove that the distortion of the $\left(\frac{1}{2}, \frac{1}{2}\right)$-Quantile Mechanism is at most $7$. To simplify notation, because all groups have the same size, we replace the average-of-average objective with the total sum of all agents’ costs. 

\begin{proof}[Proof of Theorem~\ref{AoAupper}]

Given any instance \( I \), let \(\mathbf{w} = (w_1, w_2) \) be the facility location profile produced by the mechanism, and let \( \mathbf{o} = (o_1, o_2) \) be an optimal solution. By the property of the mechanism, there exists a group \( d^* \) such that \( w_1 = y_{d^*(1)} = t(m_{d^*}) \) and \( w_2 = y_{d^*(2)} = s(m_{d^*}) \). Since \( w_1 \) and \( w_2 \) are two consecutive candidate locations, it holds that \( o \notin (w_1, w_2) \) for every \( o \in \{o_1, o_2\} \). If \( \mathbf{w} = \mathbf{o} \), the distortion equals 1; hence we only need to consider the two exhaustive cases below.

\medskip

\noindent \textbf{Case 1:} There exists \( o \in \{o_1, o_2\} \) such that \( w_2 < o \).  
For any agent \( i \) with \( x_i \leq \frac{w_1 + w_2}{2} \), we have \( \delta(x_i, w_2) \leq \delta(x_i, o) \). Applying this and the triangle inequality to agents with \( x_i > \frac{w_1 + w_2}{2} \), we obtain  

\[
\begin{aligned}
AoA(\mathbf{w}) &= \sum_{i: x_i \leq \frac{w_1 + w_2}{2}} \delta(x_i, w_2) + \sum_{i: x_i > \frac{w_1 + w_2}{2}} \delta(x_i, w_1) \\
&\leq \sum_{i \in N} \delta(x_i, o) + \left| \left\{ i : x_i > \tfrac{w_1 + w_2}{2} \right\} \right| \cdot \delta(w_1, o) \\
&= \sum_{i \in N} \delta(x_i, o) + \left( n - \left| \left\{ i : x_i \leq \tfrac{w_1 + w_2}{2} \right\} \right| \right) \cdot \delta(w_1, o).
\end{aligned}
\]

Because \( w_2 < o \), any agent \( i \) with \( x_i \leq \frac{w_1 + w_2}{2} \) satisfies \( \delta(x_i, o) \geq \delta(w_1, o)/2 \). Therefore,

\[
AoA(\mathbf{o}) \geq \sum_{i \in N} \delta(x_i, o) \geq \sum_{i: x_i \leq \frac{w_1 + w_2}{2}} \delta(x_i, o) \geq \left| \left\{ i : x_i \leq \tfrac{w_1 + w_2}{2} \right\} \right| \cdot \frac{\delta(w_1, o)}{2}.
\]

Consequently,

\[
\frac{AoA(\mathbf{w})}{AoA(\mathbf{o})} \leq 1 + 2 \cdot \frac{n - \left| \left\{ i : x_i \leq \tfrac{w_1 + w_2}{2} \right\} \right|}{\left| \left\{ i : x_i \leq \tfrac{w_1 + w_2}{2} \right\} \right|} = \frac{2n}{\left| \left\{ i : x_i \leq \tfrac{w_1 + w_2}{2} \right\} \right|} - 1.
\]

The median agent \( m_{d^*} \) of group \( d^* \) (which determines the mechanism’s output) is closer to \( w_1 \) than to \( w_2 \); thus she must lie weakly to the left of \( (w_1 + w_2)/2 \). By the definition of the mechanism, at least \( k/2 \) groups have their median agent weakly to the left of \( m_{d^*} \), and in each such group, at least half of the agents lie weakly to the left of the group’s median. Hence,
\[
\left| \left\{ i : x_i \leq \tfrac{w_1 + w_2}{2} \right\} \right| \geq \frac{k}{2} \cdot \frac{n}{2k} = \frac{n}{4},
\]
which yields
\[
\frac{2n}{\left| \left\{ i : x_i \leq \tfrac{w_1 + w_2}{2} \right\} \right|} - 1 \leq \frac{2n}{n/4} - 1 = 7.
\]

\medskip

\noindent \textbf{Case 2:} There exists \( o \in \{o_1, o_2\} \) such that \( o < w_1 \).  
Then \( \frac{o + w_2}{2} < \frac{w_1 + w_2}{2} \). Observe the following:

\begin{itemize}
\item For any agent \( i \) with \( x_i \geq \frac{w_1 + w_2}{2} \), we have \( \delta(x_i, w_1) \leq \delta(x_i, o) \).
\item For any agent \( i \) with \( x_i \in \bigl[ \frac{o + w_2}{2}, \frac{w_1 + w_2}{2} \bigr) \), we have \( \delta(x_i, w_2) \leq \delta(x_i, o) \).
\end{itemize}

Using these facts and applying the triangle inequality to agents with \( x_i < \frac{o + w_2}{2} \) (who incur cost \( \delta(x_i, w_2) \)), we get

\[
\begin{aligned}
AoA(\mathbf{w}) &= \sum_{i: x_i < \frac{o+w_2}{2}} \delta(x_i, w_2) + \sum_{i: \frac{o+w_2}{2} \leq x_i < \frac{w_1 + w_2}{2}} \delta(x_i, w_2) + \sum_{i: x_i \geq \frac{w_1 + w_2}{2}} \delta(x_i, w_1) \\
&\leq \sum_{i \in N} \delta(x_i, o) + \left| \left\{ i : x_i < \tfrac{o+w_2}{2} \right\} \right| \cdot \delta(o, w_2) \\
&\leq \sum_{i \in N} \delta(x_i, o) + \left( n - \left| \left\{ i : x_i \geq \tfrac{o+w_2}{2} \right\} \right| \right) \cdot \delta(o, w_2).
\end{aligned}
\]

Since \( o < w_1 \), any agent \( i \) with \( x_i \geq \frac{o + w_2}{2} \) satisfies \( \delta(x_i, o) \geq \delta(o, w_2)/2 \). Thus,

\[
AoA(\mathbf{o}) \geq \sum_{i \in N} \delta(x_i, o) \geq \sum_{i: x_i \geq \frac{o+w_2}{2}} \delta(x_i, o) \geq \left| \left\{ i : x_i \geq \tfrac{o+w_2}{2} \right\} \right| \cdot \frac{\delta(o, w_2)}{2}.
\]

Therefore,

\[
\frac{AoA(\mathbf{w})}{AoA(\mathbf{o})} \leq 1 + 2 \cdot \frac{n - \left| \left\{ i : x_i \geq \tfrac{o+w_2}{2} \right\} \right|}{\left| \left\{ i : x_i \geq \tfrac{o+w_2}{2} \right\} \right|} = \frac{2n}{\left| \left\{ i : x_i \geq \tfrac{o+w_2}{2} \right\} \right|} - 1.
\]

The median agent \( m_{d^*} \) is closer to \( w_2 \) than to \( o \), so she must lie weakly to the right of \( \frac{o + w_2}{2} \). By an argument analogous to Case 1, we have $\left| \left\{ i : x_i \geq \tfrac{o+w_2}{2} \right\} \right| \geq  \frac{n}{4},$ and therefore
\[
\frac{2n}{\left| \left\{ i : x_i \geq \tfrac{o+w_2}{2} \right\} \right|} - 1 \leq 7.
\]

\medskip

\noindent This completes the proof.

%By definition, at least $k/2$ $y_1$-representatives are (weakly) to the left of $w_1$, and in their corresponding groups, at least half of the agents are closer to these representatives than $w_2$. 

%This is also true for any other group with $y_1$-representative that is equal to $w_1$. Let $Z = \{d \neq d^*: t(m_d) = w_1 \wedge s(m_d)\geq s(m_{d^*}) \}$ be the set of groups with replaced representatives that are weakly to the right of $s(m_{d^*})$; let $z=|Z|$ and note that $z \in \{0, \ldots, \lfloor k/2 \rfloor \}$. By the definition of the mechanism, since $w_2$ is the median representative among the remaining $y_1$-representatives and the replaced representatives, there are at least $\lfloor \frac{k}{2} \rfloor - z$ groups with $y_1$-representatives that are (weakly) to the right of $w_2$, and in those groups, at least half of the agents are closer to these $y_1$-representatives than to $w_2$; hence, these agents are closer to $w_2$ than to $o$. 
%Putting everything together, we have
%$$\left|i: x_i \geq \frac{o+w_2}{2}\right| \geq (z+1)\frac{n}{2k} + \left(\left\lfloor \frac{k}{2} \right\rfloor - z \right) \frac{n}{2k} = \left(\left\lfloor \frac{k}{2} \right\rfloor +1 \right) \frac{n}{2k} \geq \frac{n(k+1)}{4k}.$$
\end{proof}

We also present a matching lower bound of $7$ on the distortion of any $(\alpha,\beta)$-Quantile Mechanism, showing that $\alpha=\beta=1/2$ are the best parameter choices within this class. 

\begin{theorem}
For any $\alpha, \beta \in (0,1)$, the distortion of the $(\alpha,\beta)$-Quantile Mechanism is at least 
\begin{align*}
    \max\left\{ \frac{2}{\alpha\beta}-1, \frac{2}{(1-\alpha)(1-\beta)}-1 \right\} \geq 7.
\end{align*}
\end{theorem}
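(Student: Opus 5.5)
We construct, for each term in the maximum, a family of instances on which the AoA cost of the $(\alpha,\beta)$-Quantile Mechanism approaches that term times the optimum. Throughout, all groups have the same size $n_0$. By the equal-size reduction, the objective is then proportional to the total cost $\sum_i c(x_i,\mathbf{w})$. By left--right symmetry (reflecting the line maps the $(\alpha,\beta)$-mechanism to a $(1-\alpha,1-\beta)$-type mechanism, up to ceiling and tie-breaking effects handled by $\theta$-perturbations), it suffices to prove the bound $\frac{2}{\alpha\beta}-1$.

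The construction mirrors the tight case of Case~1 in the proof of Theorem~\ref{AoAupper}. Use candidate locations $A=\{0,1,2,2\}$, with a pair at $2$ so that an optimum may place both facilities at $2$. Take $k$ groups, each of $n_0$ agents. A $\beta$-fraction of the groups (precisely $\lceil\beta k\rceil$ of them) are ``left'' groups: in each, the $\lceil\alpha n_0\rceil$ leftmost agents sit at $\frac12-\theta$, so $\alpha_d$ is at $\frac12-\theta$ with $t=0$, $s=1$, and the remaining agents sit at $2$. All other groups have every agent at $2$, with representatives $(2,2)$.

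In Step 2, $w_1$ is the $\lceil\beta k\rceil$-th leftmost of the $z_d$, which is $0$. The update then replaces those zeros by $1$, so $w_2=1$ and the output is $\mathbf{w}=(0,1)$. The order of the pair is irrelevant for the max-variant cost. Consequently, agents at $\frac12-\theta$ pay about $\frac12$, and agents at $2$ pay $2$.

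Compare this with the profile $(2,2)$. Agents at $\frac12-\theta$ pay about $\frac32$, and agents at $2$ pay $0$. Let $p\approx\alpha\beta$ be the fraction of agents at $\frac12-\theta$. Then
\[
\frac{\text{cost}(\mathbf{w})}{\text{cost}(2,2)}\approx\frac{p\cdot\frac12+(1-p)\cdot 2}{p\cdot\frac32}.
\]
This gives $\frac{4-3p}{3p}$, which does not match the target. To recover the tight ratio of Case~1, we instead need the left agents placed just left of the bisector $\frac{w_1+w_2}{2}$ and the far agents at distance $\delta(w_1,o)$ from $w_1$. Concretely, adjust the instance so that $w=(0,1)$, $o=(1,2)$ or a candidate pair near $o$, left agents sit at $\frac12-\theta$, and right agents sit at $o$. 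The left agents' cost ratio is then $\frac{\delta(x,w_2)}{\delta(x,o)}$, and this parameter must be tuned. Rather than guess, we would solve for positions that make both inequalities in Case~1 tight simultaneously: left agents at the bisector paying $\frac{\delta(w_1,o)}{2}$ under $o$, and right agents paying $0$ under the optimum but $\delta(w_1,o)$ under $\mathbf{w}$. We then place the two optimal facilities at a doubled candidate location $o$ with $o-w_2$ tiny; using $A=\{0,1,1+\eta,1+\eta\}$ (or multiplicities as needed) achieves this. Left agents at $\frac12$ pay $\frac12$ under $\mathbf{w}$ and $\frac12+\eta$ under $o$, and right agents at $1+\eta$ pay $1+\eta$ versus $0$. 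The ratio becomes $\frac{p/2+(1-p)}{p/2}=\frac{2}{p}-1\to\frac{2}{\alpha\beta}-1$.

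The main obstacle is making the fraction $p$ actually approach $\alpha\beta$ under the ceilings. Left agents must number $\lceil\alpha n_0\rceil-1$ in each left group so that $\alpha_d$ itself is the first right-hand agent. We would therefore put the $\alpha_d$-agent just left of the bisector, slightly perturbed, and let $n_0,k\to\infty$ with $\lceil\beta k\rceil/k\to\beta$. Tie-breaking (left preferred) must be checked so that $t(\alpha_d)=0$. Finally, since $\alpha\beta\le\frac14$ whenever the symmetric term is not larger (by AM--GM, $\min\{\alpha\beta,(1-\alpha)(1-\beta)\}\le\frac14$), the maximum is at least $\frac{2}{1/4}-1=7$.
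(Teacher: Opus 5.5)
Your final instance ($\lceil\beta k\rceil$ left groups, each with $\lceil\alpha n_0\rceil$ agents just left of $\tfrac12$ and representatives $(0,1)$, and all other agents on a doubled candidate at $1+\eta$) is the paper's construction on $\{0,1,1\}$; you may simply take $\eta=0$, and your ratio $\tfrac{2}{p}-1$ with $p\to\alpha\beta$ is correct. You depart from the paper only in obtaining the second term by reflection instead of the explicit mirrored instance on $\{0,0,1\}$, and in deriving $\ge 7$ from $\min\{\alpha\beta,(1-\alpha)(1-\beta)\}\le\tfrac14$, which is cleaner than the paper's monotonicity argument. The reflection is fine because the reflected mechanism only changes the indices to $n_0-\lceil\alpha n_0\rceil+1$ and $k-\lceil\beta k\rceil+1$ and uses right tie-breaking, and your tie-free construction works verbatim with those indices.

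Discard the remark that $\alpha_d$ should be ``the first right-hand agent''. Taken literally, it gives that group representatives $(1+\eta,1+\eta)$ and collapses the construction; $\alpha_d$ must lie just left of $\tfrac12$. The ceilings only make $p$ slightly exceed $\alpha\beta$, with $p\to\alpha\beta$ as $n_0,k\to\infty$.
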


\begin{proof}
Let $\varepsilon > 0$ be an infinitesimal.
Consider an instance with candidate locations $\{0,1,1\}$. 
There are $k$ groups, each of size $n/k$, such that:
    \begin{itemize}
        \item In the first $\beta k$ groups, $\alpha n/k$ agents are located at $1/2-\varepsilon$ while the remaining $(1-\alpha)n/k$ are located at $1$. For each such group, the $\alpha n/k$-th agent (when agents are sorted from left to right) is at $1/2 - \varepsilon$, so the representative pair is $(0,1)$. 
        
        \item In the remaining $(1-\beta)k$ groups, all $n/k$ agents are located at $1$. Clearly, for each such group, the representative pair is $(1,1)$. 
    \end{itemize}
    Among the $y_{d(1)}$-representatives of the groups, $\beta k$ of them are at $0$ and the remaining $(1-\beta)k$ are at $1$. Therefore, the mechanism selects $0$ as the first location, which is then replaced by $1$, leading to the final solution $(0,1)$. As $\varepsilon$ tends to $0$, the sum-cost of solution $(0,1)$ becomes
    \begin{align*}
        \beta k \left( \frac{\alpha n}{k}\cdot \frac12 + \frac{(1-\alpha) n}{k}  \right) + (1-\beta)k \cdot \frac{n}{k}
        = \left( 1 - \frac{\alpha \beta}{2} \right) n.
    \end{align*}
    On the other hand, the sum-cost of solution $(1,1)$ is
    \begin{align*}
        \beta k \cdot \frac{\alpha n}{k}\cdot \frac12 = \frac{\alpha\beta}{2} n.
    \end{align*}
    Hence, the distortion is at least
    \begin{align*}
        \frac{2}{\alpha \beta} - 1.
    \end{align*}

Consider a symmetric instance with candidate locations $\{0,0,1\}$. 
There are $k$ groups, each of size $n/k$, such that:
\begin{itemize}
    \item In the first $(\beta-\varepsilon) k$ groups, all $n/k$ agents are located at $0$. Clearly, for each such group, the representative pair is $(0,0)$. 
    \item In the remaining $(1-\beta+\varepsilon) k$ groups, $(\alpha-\varepsilon)n/k$ agents are located at $0$ while the remaining $(1-\alpha+\varepsilon) n/k$ agents are located at $1/2+\varepsilon$. For each such group, since the $\alpha n/k$-th agent (when agents are sorted from left to right) is located at $1/2+\varepsilon$, the representative pair is $(1,0)$.  
\end{itemize}
 By a derivation similar to the above instance, the mechanism will output $(1,0)$. As $\varepsilon$ tends to $0$, the sum-cost of solution $(1,0)$ becomes
\begin{align*}
        \beta k \cdot \frac{n}{k} + (1-\beta) k \left( \frac{\alpha n}{k} + \frac{(1-\alpha) n}{k} \cdot \frac12  \right)
        = \left( 1 - \frac{(1-\alpha)(1-\beta)}{2} \right) n.
    \end{align*}
 On the other hand, the sum-cost of solution $(0,0)$ is
    \begin{align*}
        (1-\beta) k \cdot \frac{(1-\alpha) n}{k}\cdot \frac12 = \frac{(1-\alpha)(1-\beta)}{2} n.
    \end{align*}
    Hence, the distortion is at least
    \begin{align*}
        \frac{2}{(1-\alpha)(1-\beta)} - 1.
    \end{align*}

By the above instances, the distortion of $(\alpha,\beta)$-Quantile Mechanism is 
\begin{align*}
    \max\left\{ \frac{2}{\alpha\beta}-1, \frac{2}{(1-\alpha)(1-\beta)}-1 \right\}.
\end{align*}
The left term is non-increasing in both both $\alpha$ and $\beta$, while the right term is non-decreasing. Hence, the minimum of the maximum is attained when the two terms are equal, i.e., $\alpha \beta = (1-\alpha)(1-\beta)$, which simplifies to $\alpha+\beta=1$. Substituting $\beta = 1-\alpha$ into either expression gives $\frac{2}{\alpha(1-\alpha)}-1$, which is minimized at $\alpha=1/2$, yielding a lower bound of $7$. This completes the proof.
\end{proof}

\section{Max-of-Max cost}

In this section, we study the Max-of-Max cost, defined as the maximum individual cost among all agents. 
For the lower bound, we prove that the distortion of any strategyproof mechanism is at least $3-\epsilon$ for any $\epsilon>0$. For the upper bound, we analyze the $(1,1)$-Quantile Mechanism\footnote{In fact, whatever values $\alpha$ and $\beta$ take, the $(\alpha,\beta)$-Quantile Mechanism would achieve a distortion of at most $3$ under the max-of-max cost.}, which achieves a distortion of at most $3$.

\begin{theorem}\label{MoMlower}
For the Max-of-Max cost, the distortion of any strategyproof mechanism is at least $3-\epsilon$, for any $\epsilon>0$.
\end{theorem}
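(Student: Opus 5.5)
We mirror the proof of Theorem~\ref{AoAlower}. Assume for contradiction that a strategyproof mechanism $M$ has distortion strictly smaller than $3-\epsilon$, and let $\theta>0$ be an infinitesimal. We keep the candidate multiset $A=\{0,0,1,1\}$. The single-agent, single-group instances are reused to pin down representatives via strategyproofness. The two-group instances are modified so that the Max-of-Max objective, which only cares about the worst agent, still yields a ratio close to $3$.

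\emph{Step 1 (single-agent instances).} With one agent at $0$, any output other than $(0,0)$ has positive cost while $(0,0)$ has cost $0$, so $M$ must output $(0,0)$. With one agent at $\frac12-\theta$, the cost of $(0,0)$ is $\frac12-\theta$, while every other profile costs her $\frac12+\theta$. She could therefore deviate to reporting $0$, so strategyproofness forces $(0,0)$. Symmetrically, an agent at $1$ gets $(1,1)$, and an agent at $\frac12+\theta$ gets $(1,1)$. We also need two ``far'' agents. For an agent alone at $\frac32$, every profile other than $(1,1)$ costs at least $\frac32$, whereas $(1,1)$ costs $\frac12$. Since the ratio $3>3-\epsilon$, $M$ must output $(1,1)$. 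Symmetrically, an agent alone at $-\frac12$ yields $(0,0)$.

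\emph{Step 2 (instance $I_5$).} Take two groups: group 1 has a single agent at $\frac12-\theta$, and group 2 has a single agent at $\frac32$. By (P1) the representatives are $(0,0)$ and $(1,1)$. The Max-of-Max costs are as follows.
\begin{align*}
MoM(0,0)&=\tfrac32, & MoM(0,1)=MoM(1,0)&=\tfrac32, & MoM(1,1)&=\tfrac12+\theta.
\end{align*}
Any output other than $(1,1)$ has ratio $\frac{3/2}{1/2+\theta}\ge 3-\epsilon$ for small $\theta$. Hence $M(I_5)=(1,1)$.

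\emph{Step 3 (instance $I_6$).} Take two groups: group 1 has a single agent at $\frac12+\theta$, and group 2 has a single agent at $-\frac12$. By Step 1 and (P1), the representatives are $(1,1)$ and $(0,0)$, exactly as in $I_5$. So (P2) forces $M(I_6)=(1,1)$. This output costs the agent at $-\frac12$ a distance of $\frac32$, whereas $(0,0)$ has Max-of-Max cost $\frac12+\theta$. The ratio $\frac{3/2}{1/2+\theta}$ exceeds $3-\epsilon$ once $\theta\le \frac{\epsilon}{6-2\epsilon}$, which gives the contradiction.

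The main obstacle is calibrating the far agents. Placing them at $1$ and $0$, as in the AoA proof, only gives a ratio of $2$ under Max-of-Max. Moving them to $\frac32$ and $-\frac12$ gives ratio $3$, and it still fixes their representatives through the distortion assumption rather than through strategyproofness. This last point must be checked carefully, including the mixed profile $(0,1)$.
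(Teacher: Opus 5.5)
Your proof is correct, but it takes a different route from the paper's. You reuse the two-instance indistinguishability argument of the AoA and AoM lower bounds. With $A=\{0,0,1,1\}$ and single-agent groups, you build $I_5$ and $I_6$ with the same representative profiles. You force $M(I_5)=(1,1)$ by the distortion assumption, and (P2) carries this output over to $I_6$, where $(0,0)$ is optimal. Moving the far agents to $\frac32$ and $-\frac12$ is exactly what lifts the MoM ratio from $2$ to $3$. Your cost computations all check out, including the mixed profile $(0,1)$ and the threshold $\theta\le\frac{\epsilon}{6-2\epsilon}$.

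The paper instead uses three clusters, $A=\{0,0,2,2,4,4\}$:
\begin{itemize}
\item A two-agent group $\{-1,1-\theta\}$ pins the representatives $(0,0)$ through distortion, since every other pair costs at least $3$ against $1$.
\item Strategyproofness then moves the agent at $-1$ to $1-\theta$, and symmetrically the paper obtains a group with representatives $(4,4)$.
\item A single two-group instance with agents at $1-\theta$ and $3+\theta$ has representatives only at $0$ and $4$. Every admissible output costs $3+\theta$, while the never-offered middle pair $(2,2)$ costs $1+\theta$.
\end{itemize}

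What each approach buys: the paper's argument needs only (P1) and the rule that outputs come from the representatives. It uses neither (P2) nor a second indistinguishable instance, because the loss comes from the optimum not being available at all. Your argument works with a smaller candidate set and one-agent groups. It also gives one template for the AoA, AoM and MoM bounds, at the cost of relying on (P2).

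One small presentational point: in $I_6$ you list the groups in the reverse order of $I_5$, so group~1 has $(1,1)$ rather than $(0,0)$. The phrase ``exactly as in $I_5$'' should therefore be justified by relabeling the groups, putting the agent at $-\frac12$ in group~1. This is harmless, and the paper's own AoA and AoM proofs make the same swap.
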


\begin{proof}
We proceed by contradiction. Assume that there exists a strategyproof mechanism $M$ with distortion strictly smaller than $3-\epsilon$ for some $\epsilon>0$, and let $\theta>0$ be sufficiently small. We derive a contradiction by constructing a sequence of instances with candidate location set $A = \{0,0,2,2,4,4\}$.

\medskip
\noindent
\textbf{(Instance $I_1$)}
Consider an instance $I_1$ consisting of a single group with two agents located at $-1$ and $1-\theta$. By computing the MoM cost for all possible facility location pairs in $A$, we have $OPT(I_1)=(0,0)$. Thus, $M(I_1)=(0,0)$; otherwise its distortion would be at least $3$, contradicting the assumption.

\medskip
\noindent
\textbf{(Instance $I_2$)}
Consider an instance $I_2$ with a single group containing two agents both at $1-\theta$. If $M$ does not output $(0,0)$ on $I_2$, then the agent's cost would be at least $1+\theta$. Since the agent can reduce her cost to $1-\theta$ by misreporting her location as $-1$, strategyproofness implies $M(I_2)=(0,0)$.

\medskip
\noindent
\textbf{(Instance $I_3$)}
Consider an instance $I_3$ with a single group containing two agents at $5$ and $3+\theta$. By symmetry with $I_1$, we have $OPT(I_3)=(4,4)$, so $M$ must output $(4,4)$ to avoid distortion at least $3$.

\medskip
\noindent
\textbf{(Instance $I_4$)}
Consider an instance $I_4$ with a single group containing two agents both at $3+\theta$. Using the same strategyproofness argument as in $I_2$, we obtain $M(I_4)=(4,4)$.

\medskip
\noindent
\textbf{(Instance $I_5$)}
Consider an instance $I_5$ with two groups: group 1 has two agents at $1-\theta$, and group 2 has two agents at $3+\theta$. By $M(I_2)=(0,0)$ and $M(I_4)=(4,4)$, (P1) implies that the group representatives in $I_5$ are $(0,0)$ and $(4,4)$. Therefore, $M$ can only output $(0,0)$, $(0,4)$, $(4,0)$, or $(4,4)$, all of which yield MoM cost $3+\theta$. However, $OPT(I_5)=(2,2)$ and $MoM(OPT|I_5)=1+\theta$. As a result, the distortion of $M$ is at least
\[
\frac{3+\theta}{1+\theta} \ge 3-\epsilon,
\]
for sufficiently small $\theta$, which is a contradiction.
\end{proof} 

We next present the $(1,1)$-Quantile Mechanism and prove that its distortion is at most 3.

\medskip
\noindent 
\textbf{$\bm{(1,1)}$-Quantile Mechanism}  
\begin{itemize}
    \item[$\bullet$] Step 1. For each group $d\in D$ , set $y_{d(1)}=t({r_d})$, $y _{d(2)}=s({r_d})$.
    \item[$\bullet$] Step 2. Initialize $z_d=y_{d(1)}$ for all $d\in D$ and let $w_1$ be the rightmost location in the set $\{z_d\}_{d\in D}$. For each group $d\in D$, update $z_d$ to $y_{d(2)}$ if $y_{d(1)}=w_1$, then set $w_2$ to be the rightmost location in the updated set $\{z_d\}_{d\in D}$.
\end{itemize} 
\begin{theorem}\label{MoMupper}
For the Max-of-Max cost, the distortion of the $(1,1)$-Quantile mechanism is at most 3. 
\end{theorem}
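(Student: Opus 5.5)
The plan is to show that the $(1,1)$-Quantile Mechanism always outputs the two candidate locations nearest to the globally rightmost agent $r$, that is, $\{w_1,w_2\}=\{t(r),s(r)\}$. The bound $3$ then follows from one triangle-inequality chain routed through $x_r$.

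\textbf{Step 1 (identify the output).} Since $t(\cdot)$ is monotone non-decreasing in the agent's position (ties are broken to the left), the rightmost element of $\{t(r_d)\}_{d\in D}$ is $t(r)$. Hence $w_1=t(r)$.

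For $w_2$, by Remark~\ref{remark2} the output is a consecutive pair $(t(x_j),s(x_j))$ for some group rightmost agent $j$ with $t(x_j)=w_1$ and $x_j\le x_r$. So $w_2$ is one of the two candidate neighbours of $w_1$; let $a$ be the left neighbour and $b$ the right neighbour. I split into two cases.
\begin{itemize}
\item If $s(r)=b$, then $b$ appears in the updated set $\{z_d\}$, because $r$'s own group has $y_{d(1)}=w_1$ and $z_d$ is updated to $s(r)=b$. Every other element of the updated set is at most $b$. So the rightmost choice gives $w_2=b=s(r)$.
\item If $s(r)=a$, then $x_r$ is weakly closer to $a$ than to $b$. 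Every $x_j\le x_r$ with $t(x_j)=w_1$ is then also weakly closer to $a$ than to $b$, so $s(x_j)=a$. The tie-breaking convention has to be checked carefully here. In this case the updated set contains only $a$ and locations $\le w_1$ other than $w_1$ itself, so $w_2=a=s(r)$.
\end{itemize}

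This step is the main obstacle: it requires a careful treatment of the multiset case (equal candidate locations) and of ties, but the monotonicity of $t$ and $s$ along the line should settle it.

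\textbf{Step 2 (the rightmost agent is served optimally).} Let $\mathbf{o}=(o_1,o_2)$ be an optimal profile. Since $o_1$ and $o_2$ are distinct elements of the multiset $A$, at most one of them equals $t(r)$. The other one is therefore at distance at least $\delta(x_r,s(r))$ from $x_r$, and both are at distance at least $\delta(x_r,t(r))$. Hence
\[
c(x_r,\mathbf{w})=\delta(x_r,s(r))\le \max\{\delta(x_r,o_1),\delta(x_r,o_2)\}=c(x_r,\mathbf{o})\le MoM(\mathbf{o}|I).
\]

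\textbf{Step 3 (all other agents).} For any agent $i$, the triangle inequality gives
\[
c(x_i,\mathbf{w})\le \delta(x_i,x_r)+c(x_r,\mathbf{w}).
\]
Bounding the first term through $o_1$ gives
\[
\delta(x_i,x_r)\le \delta(x_i,o_1)+\delta(o_1,x_r)\le 2\,MoM(\mathbf{o}|I).
\]
Combining this with Step 2 yields $c(x_i,\mathbf{w})\le 3\,MoM(\mathbf{o}|I)$ for every agent $i$. Taking the maximum over all agents gives distortion at most $3$.
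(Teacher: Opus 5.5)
Your proof is correct and follows the paper's argument. The paper uses the same triangle-inequality chain through $x_r$ and $o_1$, together with the fact that the rightmost agent $r$ pays no more under $\mathbf{w}=(t(r),s(r))$ than under $\mathbf{o}$; your Steps 1 and 2 spell out what the paper only asserts, namely that the mechanism outputs $(t(r),s(r))$ and that $c(x_r,\mathbf{w})\le c(x_r,\mathbf{o})$ because $o_1,o_2$ are distinct elements of $A$. The duplicate case you flag is immediate: if the value $w_1$ occurs twice in $A$, every group with $t(r_d)=w_1$ has $s(r_d)$ equal to that value, so again $w_2=s(r)$.
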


\begin{proof}
	Given any instance $I$, let $\mathbf{w} = (w_1,w_2)$ be the location profile chosen by the mechanism and $\mathbf{o}= (o_1,o_2)$ be an optimal solution. By the property of the mechanism, we have  $w_1 = t(r)$ and $w_2 = s(r)$. Let $i^*$ denote the agent  such that $MoM(\mathbf{w}) = \max\{\delta(x_{i^*}, w_1), \delta(x_{i^*}, w_2)\} = \delta(x_{i^*}, w(x_{i^*}))$. As $w_1=t(r)$ and $w_2=s(r)$ are the closest candidate locations to $r$ and using the triangle inequality, we have 
	\begin{equation*}
		\begin{aligned}
			MoM(\mathbf{w}) &= \delta(x_{i^*}, w(x_{i^*})) \\
			&\leq \delta(x_{i^*}, o_1) + \delta(o_1, x_r) + \delta(x_r, w(x_{i^*}))\\
			&\le \delta(x_{i^*}, o(x_{i^*})) + \delta(x_r, o(x_r)) + \delta(x_r, w(x_r)) \\
			&\le \delta(x_{i^*}, o(x_{i^*})) + 2 \cdot \delta(x_r, o(x_r)).   
		\end{aligned}
	\end{equation*}
	
For the optimal solution $\mathbf{o}$, we have 
	$MoM(\mathbf{o}) \geq \delta(x_i, o(x_i))$ for all $i\in N$.  
	Thus, we obtain
\[MoM(\mathbf{w})\leq \delta(x_{i^*},o(x_{i^*}))+2\cdot \delta(x_r,o(x_r))\leq3\cdot MoM(\mathbf{o}).\]
\end{proof}

\section{Max-of-Average cost}

In this section, we study the Max-of-Average cost, defined as the maximum of the average individual costs of all groups.
For the lower bound, we prove that the distortion of any strategyproof mechanism is at least $\frac{3+\sqrt{17}}{2}-\epsilon$, for any $\epsilon>0$.
For the upper bound, we analyze the  $(\frac{3-\sqrt{5}}{2},1)$-Quantile Mechanism, which achieves a distortion of at most $2+\sqrt{5}$. 

\begin{theorem}\label{MoAlower}
For the Max-of-Average cost, the distortion of any strategyproof mechanism is at least $\frac{3+\sqrt{17}}{2}-\epsilon$, for any $\epsilon>0$.
\end{theorem}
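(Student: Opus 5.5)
The plan follows the template of Theorems~\ref{AoAlower} and~\ref{MoMlower}. We argue by contradiction. Suppose a strategyproof mechanism $M$ has distortion below $\rho-\epsilon$, where $\rho=\frac{3+\sqrt{17}}{2}$. We fix a small candidate multiset, for instance $A=\{0,0,1,1\}$, possibly enlarged by a third doubled point if more room is needed. We then build a chain of single-group instances whose outputs are pinned down in one of two ways: either by optimality, since any other output would already give distortion at least $\rho$, or by strategyproofness, in the style of $I_2$ and $I_4$ of Theorem~\ref{AoAlower}, where an agent at $\frac12\mp\theta$ could deviate to a point whose output she strictly prefers. Note that $\rho$ is the positive root of $\rho^2-3\rho-2=0$, equivalently $\rho=3+\frac{2}{\rho}$. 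This suggests that the final contradiction comes from two competing ratios in a free parameter $\lambda$ (a fraction of agents, or a position). One ratio will be of the form $3+\frac{2}{\text{something}}$, the other of the form $\text{something}$, and equating them yields $\rho$.

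\textbf{Step 1 (pinning representatives).} We use groups with several agents. A typical group has a $\lambda$-fraction of agents at one end point and the rest near the midpoint $\frac12\pm\theta$. Strategyproofness, together with one single-agent instance whose output is forced by optimality, should force its representative pair to be $(0,0)$; the mirror group gets $(1,1)$. The key point is that for a group, the average cost of $(0,0)$ versus $(1,1)$ versus $(0,1)$ depends on $\lambda$, while P1 guarantees that its representative pair is the same regardless of which other groups are present.

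\textbf{Step 2 (two multi-group instances with identical representatives).} Consider two instances $J_1,J_2$, each consisting of one group of representative $(0,0)$ and one of representative $(1,1)$, but with different actual agent positions.
\begin{itemize}
\item In $J_1$, the MoA cost of every output other than, say, $(1,1)$ is at least $\rho$ times optimal. This holds provided the group that is worse off under $(0,0)$ has average cost roughly $1$ under $(0,0)$ and $(0,1)$, versus about $1/\rho$ under $(1,1)$. This forces $M(J_1)=(1,1)$.
\item By P2, $M(J_2)=(1,1)$ as well. In $J_2$, however, $(1,1)$ is bad: the max group average under $(1,1)$ is compared against an optimum such as $(0,0)$ or $(0,1)$, whose max group average is controlled by the $\lambda$-parametrised group.
\end{itemize}
The mixed pair $(0,1)$ is a candidate optimum here, since under max-variant cost it costs every agent at least $\frac12$. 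Using it lets the numerator be of the form "distance 1 plus twice a small offset", which produces the $3+\frac{2}{\cdot}$ shape.

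\textbf{Step 3 (tuning).} We compute both ratios as functions of $\lambda$ and $\theta\to0$, then choose $\lambda$ so that both are at least $\rho-\epsilon$, which reduces to $\lambda^2-3\lambda-2\le 0$ at the critical value.

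The main obstacle is Step 2. One must design the group compositions so that the representatives are provably forced, while the optimal MoA solutions differ between $J_1$ and $J_2$ by exactly the needed factor. The max-of-average objective lets one heavy group dominate, so the first attempt would be to make the "bad" group in each instance a different one of the two, and to use the $(0,1)$ option, together with a possibly extra candidate point as in Theorem~\ref{MoMlower}, to create the asymmetry.
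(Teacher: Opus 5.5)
\textbf{Verdict: genuine gap.} Your proposal is a plan rather than a proof. The step that carries the argument (Step~2) is left open, as you say yourself, and the parts you do specify would not reach $\rho=\frac{3+\sqrt{17}}{2}$.

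\textbf{Step~1 cannot pin the group you need.} Take a seed that is a single-agent instance, or any instance with all agents left of $\frac12$. Strategyproofness can then only certify groups whose agents all lie in $(-\infty,\frac12)$. Once an agent is moved to a point $\ge\frac12$, she weakly prefers $(1,1)$, so the output may switch without any profitable deviation. With only such groups, your $J_1$ cannot be forced at level $\rho$. If a $(0,0)$-group with all agents left of $\frac12$ is present, then $MoA(1,1)>\frac12$. Since $|x|\le|x-1|+1$ for every agent, also $MoA(0,0)\le MoA(1,1)+1$, so $MoA(0,0)/MoA(1,1)<3$.

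What is missing is a \emph{two-sided} seed. Take one group with $\lambda n$ agents at $0$ and $(1-\lambda)n$ at $1$. Its output is forced to $(0,0)$: $(1,1)$ costs $\lambda$ against $1-\lambda$, a ratio of $\frac{\lambda}{1-\lambda}$, and $(0,1)$ costs $1$. Strategyproofness then moves the $\lambda n$ agents one by one to $\frac12-\theta$. The result is a group represented by $(0,0)$ that is cheap ($\approx\frac{\lambda}{2}$) under $(1,1)$. The ratio $\frac{\lambda}{1-\lambda}$ is one of your two competing ratios. It comes from this single-group forcing, not from comparing $J_1$ with $J_2$.

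\textbf{The pair $(0,1)$ cannot be the optimum.} Under max-variant cost an agent pays $\max\{|x|,|x-1|\}$, which is at least her cost under both $(0,0)$ and $(1,1)$. So on $\{0,0,1,1\}$ the mixed pair is never better than the better of those two. It cannot be the optimum that produces the $3+\frac{2}{\cdot}$ shape.

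\textbf{How the paper closes the argument.} It uses $A=\{0,0,1,1,2,2\}$ and makes no appeal to P2. It combines the group above (representatives $(0,0)$) with its mirror image: $\lambda n$ agents at $\frac32+\theta$ and $(1-\lambda)n$ at $1$, with representatives $(2,2)$. Every output available from these representatives, namely $(0,0)$, $(0,2)$, $(2,2)$, has MoA cost $\approx\frac32\lambda+(1-\lambda)=1+\frac{\lambda}{2}$. The unavailable $(1,1)$ costs $\approx\frac{\lambda}{2}$, giving a ratio of $1+\frac{2}{\lambda}$. Setting $\frac{\lambda}{1-\lambda}=1+\frac{2}{\lambda}$, i.e.\ $2\lambda^2+\lambda-2=0$, gives $\lambda=\frac{\sqrt{17}-1}{4}$, and both ratios equal $\rho$. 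With $\lambda=\frac{\rho}{1+\rho}$ one has $1+\frac{2}{\lambda}=3+\frac{2}{\rho}$, which is the numerology you spotted.

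\textbf{Your two-point P2 template can also be completed}, once you have the two-sided group. Let $J_1$ consist of that group together with a group of $n$ agents at $1+\frac{\lambda}{2}$, whose representatives are $(1,1)$. In $J_1$, both $(0,0)$ and $(0,1)$ cost $\approx 1+\frac{\lambda}{2}$, against $\approx\frac{\lambda}{2}$ for $(1,1)$, so $(1,1)$ is forced. Let $J_2$ be the mirror image of $J_1$ under $x\mapsto 1-x$. It has the same representatives, so P2 yields $(1,1)$ there too, at ratio $1+\frac{2}{\lambda}=\rho$. Either construction is exactly the content your proposal leaves out.
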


\begin{proof}
We proceed by contradiction. Assume that there exists a strategyproof mechanism $M$ with distortion strictly smaller than $\frac{3+\sqrt{17}}{2}-\epsilon$ for some $\epsilon>0$. Let $\lambda=\frac{\sqrt{17}-1}{4}$ and $\theta>0$ be sufficiently small. Consider the following instances with candidate location set $A = \{0,0,1,1,2,2\}$.

\medskip
\noindent 
{\bf (Instance $I_1$)}
Consider an instance $I_1$, consisting of a single group $d_1$ with $\lambda n$ agents at $0$ and $(1-\lambda)n$ agents at $1$. Then $OPT(I_1)=(0,0)$ with MoA cost $1-\lambda$.
The MoA cost of $(1,1)$ is $\lambda$, and the MoA cost of any other solution is at least $(\lambda  + (1-\lambda)n)/n = 1$. Thus, $M(I_1)=(0,0)$; otherwise, the distortion of $M$ would be at least 
\begin{align*}
    \frac{\lambda}{1-\lambda} = \frac{3+\sqrt{17}}{2}.
\end{align*}

\medskip
\noindent 
{\bf (Instance $I_2$)}
Starting from $I_1$, move the $\lambda n$ agents at $0$ one by one  to $1/2-\theta$ to obtain instance $I_2$. 
By strategyproofness, $M(I_2)=(0,0)$; otherwise, any such agent at $1/2-\theta$ would prefer to misreport her location as $0$ to reduce her cost from $1/2+\theta$ to $1/2-\theta$. 
Hence, we obtain a group $d_2$ with $\lambda n$ agents at $1/2-\theta$ and $(1-\lambda)n$ agents at $1$ that has representatives $(0,0)$. 

\medskip
\noindent 
{\bf (Instance $I_3$)}
Symmetrically, start with an instance consisting of a single group $d_1'$ with $\lambda n$ agents at $2$ and $(1-\lambda)n$ agents at $1$. By the same reasoning, the optimal solution is (2,2) and the mechanism must output (2,2) on that instance.  Then move the agents at $2$ one by one to $3/2+\theta$ to obtain an instance $I_3$. By strategyproofness, $M(I_3)=(2,2)$, yielding a group $d_2'$ with $\lambda n$ agents at $3/2+\theta$ and $(1-\lambda)n$ agents at $1$ that has representatives $(2,2)$. 

\medskip
\noindent 
{\bf (Instance $I_4$)}
Consider an instance $I_4$ with two groups: one copy of $d_2$ and one copy of $d_2'$, which have representatives $(0,0)$ and $(2,2)$, respectively. Hence, $M$ can only output $(0,0)$, $(0,2)$, or $(2,2)$. The MoA cost of any such solution is $\lambda \cdot (3/2 + \theta) + (1-\lambda)$. On the other hand, the MoA cost of solution $(1,1)$ is $\lambda \cdot (1/2 + \theta)$. Thus, the distortion of $M$ is at least
\begin{align*}
    \frac{\frac32 \lambda + (1-\lambda)+\lambda\theta}{\frac12 \lambda\theta } \ge \frac{3+\sqrt{17}}{2}-\epsilon,
\end{align*}
for sufficiently small $\theta$, which leads to a contradiction. 
\end{proof}

We first present the $(\alpha,1)$-Quantile Mechanism and analyze its performance.
\vspace{1em}

\noindent \textbf{$\bm{(\alpha,1)}$-Quantile Mechanism} 

\begin{itemize}
    \item[$\bullet$] Step 1. For each group $d\in D$ , let $\alpha_d$ denote the $\lceil\alpha\cdot n_d\rceil$-th leftmost agent in $N_d$.  Set $y_{d(1)}=t({\alpha_d})$, $y _{d(2)}=s({\alpha_d})$.
    \item[$\bullet$]  Step 2. Initialize $z_d=y_{d(1)}$ for all $d\in D$ and let $w_1$ be the rightmost location in $\{z_d\}_{d\in D}$. For each group $d\in D$, update $z_d$ to $y_{d(2)}$ if $y_{d(1)}=w_1$, then set $w_2$ to be the rightmost location in $\{z_d\}_{d\in D}$.
\end{itemize} 
 
\begin{lemma}
    For the Max-of-Average cost, the distortion of the $(\alpha,1)$-Quantile Mechanism is at most $\max \left\{ 1+\frac{2(1-\alpha)}{\alpha}, 1 + \frac{2}{1 - \alpha}\right\}$.
\end{lemma}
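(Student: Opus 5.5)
We prove this by the same two-case comparison used in the proof of Theorem~\ref{AoAupper}, adapted to the max over groups. Fix an instance $I$, let $\mathbf{w}=(w_1,w_2)$ be the output of the $(\alpha,1)$-Quantile Mechanism and $\mathbf{o}=(o_1,o_2)$ an optimal solution. By Remark~\ref{remark2}, there is a group $d^*$ with $w_1=t(\alpha_{d^*})$ and $w_2=s(\alpha_{d^*})$. Moreover, since $\beta=1$, $w_1$ is the rightmost of the $y_{d(1)}$'s. Hence for every group $d$, $t(\alpha_d)\le w_1$, i.e.\ the $\alpha$-quantile agent of each group lies on the side of $w_1$ where its nearest candidate is weakly to the left. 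Since $w_1,w_2$ are consecutive candidates, no $o\in\{o_1,o_2\}$ lies strictly between them. If $\mathbf{w}=\mathbf{o}$ we are done. Otherwise some $o\in\{o_1,o_2\}$ lies outside $[\min(w_1,w_2),\max(w_1,w_2)]$, and we split on which side. In each case we bound, for an arbitrary group $d$, the average cost $\frac{1}{n_d}\sum_{i\in N_d}c(x_i,\mathbf{w})$ by a constant times $\frac{1}{n_d}\sum_{i\in N_d}\delta(x_i,o)\le MoA(\mathbf{o})$. Taking the maximum over $d$ then gives the bound.

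\emph{Case 1: $o$ lies to the right of both $w_1,w_2$.} As in Case~1 of Theorem~\ref{AoAupper}, agents of group $d$ on the left of the midpoint $\mu=\frac{w_1+w_2}{2}$ have $c(x_i,\mathbf{w})\le\delta(x_i,o)$. Agents on the right pay at most $\delta(x_i,o)+\delta(w_{\text{left}},o)$ by the triangle inequality. This gives
\[\sum_{i\in N_d}c(x_i,\mathbf{w})\le\sum_{i\in N_d}\delta(x_i,o)+|\{i\in N_d:x_i>\mu\}|\,\delta(w_{\text{left}},o).\]
To charge the extra term, we need many agents of group $d$ to be weakly left of $\mu$, each contributing at least $\delta(w_{\text{left}},o)/2$ to $\sum_{i\in N_d}\delta(x_i,o)$. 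The key observation is that $\alpha_d$ has nearest candidate $t(\alpha_d)\le w_1$. This places $\alpha_d$, and hence at least $\alpha n_d$ agents of $d$, weakly left of the relevant midpoint. I will need to check carefully that it is $\mu$ (or a point to its right) with the left-tie convention. This yields the ratio $1+2\cdot\frac{(1-\alpha)n_d}{\alpha n_d}=1+\frac{2(1-\alpha)}{\alpha}$.

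\emph{Case 2: $o$ lies to the left of both.} Mirroring Case~2 of Theorem~\ref{AoAupper}, agents right of $\frac{o+w_{\text{right}}}{2}$ pay at most their distance to $o$. Agents left of it pay at most $\delta(x_i,o)+\delta(o,w_{\text{right}})$. So we need a lower bound on $\sum_{i\in N_d}\delta(x_i,o)$ in terms of $\delta(o,w_{\text{right}})$ for every group $d$. Here the per-group information is weaker. Only group $d^*$ is known to have many agents (at least $(1-\alpha)n_{d^*}$, those weakly right of $\alpha_{d^*}$) to the right of $\frac{o+w_{\text{right}}}{2}$. For a generic group $d$, I plan to use instead that $MoA(\mathbf{o})$ is at least group $d^*$'s average, which is at least $(1-\alpha)\delta(o,w_{\text{right}})/2$. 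The extra term in group $d$ is at most $\delta(o,w_{\text{right}})$ per agent, i.e.\ at most $n_d\,\delta(o,w_{\text{right}})$ in total, so group $d$'s average is at most $MoA(\mathbf{o})+\delta(o,w_{\text{right}})$. Combining gives $1+\frac{2}{1-\alpha}$.

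The main obstacle is Case~1 for groups $d\ne d^*$. There I must justify that the $\alpha$-quantile agent of every group lies weakly left of the midpoint determining who is served well. This hinges on the rightmost-selection rule and the precise geometry of $t(\cdot)$ relative to $w_1,w_2$, including when $w_2<w_1$ and when duplicate candidates make $w_1=w_2$. If this direct approach fails for some group, I would fall back on the Case~2 style argument, bounding via $d^*$'s average instead. I expect the bookkeeping of which of $w_1,w_2$ is left, together with the tie-breaking footnote, to be the delicate part.
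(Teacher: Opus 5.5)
Your plan follows the paper's proof in slightly streamlined form. Your Case 2 (an optimal facility left of both output locations, lower-bounded through the at least $(1-\alpha)n_{d^*}$ agents of $d^*$ weakly right of $\alpha_{d^*}$, ratio $1+\frac{2}{1-\alpha}$) is the paper's Cases 1--2 and is sound. Your Case 1 (each group charged to its own $\alpha$-quantile agent, ratio $1+\frac{2(1-\alpha)}{\alpha}$) is the paper's Cases 3--4. Comparing with a single outside facility via $c(x_i,\mathbf{o})\ge\delta(x_i,o)$ also makes the paper's straddling Cases 5--6 unnecessary.

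The gap is that the one step carrying Case 1, namely that every group's quantile agent lies weakly left of the threshold, is left open. The justification you sketch fails when $s(\alpha_{d^*})<t(\alpha_{d^*})$, i.e.\ $w_2<w_1$. In that orientation $\mu$ is the wrong threshold: with $A=\{0,1,3\}$ and a quantile agent at $0.8$, the output is $(w_1,w_2)=(1,0)$, yet $x_{\alpha_{d^*}}>\mu=\frac12$. Nor does $t(\alpha_d)\le w_1$ alone keep $x_{\alpha_d}$ weakly left of $\frac{w_2+o}{2}$; a quantile agent with $t(\alpha_d)=w_1$ may a priori lie beyond that point. Agents beyond $\frac{w_2+o}{2}$ pay $x_i-w_2>\delta(x_i,o)$, so the charging breaks. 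Your fallback cannot replace this step. Charging every group to the $\lceil\alpha n_{d^*}\rceil$ left agents of $d^*$ only yields $1+\frac{2}{\alpha}$, which exceeds the claimed bound whenever $\alpha<\frac12$; at $\alpha=\frac{3-\sqrt5}{2}$ it is $4+\sqrt5$ instead of $2+\sqrt5$.

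To close it, write $w_{\min}\le w_{\max}$ for the two output locations and use the threshold $\frac{w_{\min}+o}{2}$. Agents weakly left of it have $c(x_i,\mathbf{w})\le\delta(x_i,o)$ and $\delta(x_i,o)\ge\frac12\delta(w_{\min},o)$, while agents right of it pay at most $\delta(x_i,o)+\delta(w_{\min},o)$. The missing idea is that every group $d$ has a representative $a_d\in\{y_{d(1)},y_{d(2)}\}$ with $a_d\le w_{\min}$. This needs \emph{both} rounds of the rightmost rule:
\begin{itemize}
\item $y_{d(1)}\le w_1$ always, which settles the case $w_1\le w_2$.
\item If $w_2<w_1$ and $y_{d(1)}<w_1$, adjacency of $w_2$ and $w_1$ in $A$ gives $y_{d(1)}\le w_2$.
\item If $w_2<w_1$ and $y_{d(1)}=w_1$, then $z_d$ was replaced by $y_{d(2)}$, and $w_2$ is the rightmost updated value, so $y_{d(2)}\le w_2$.
\end{itemize}
Now $a_d$ is one of the two nearest candidates of $\alpha_d$, and $o\neq t(\alpha_d)$ because $t(\alpha_d)\le w_1<o$. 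Hence $\delta(x_{\alpha_d},a_d)\le\delta(x_{\alpha_d},o)$, so $x_{\alpha_d}\le\frac{a_d+o}{2}\le\frac{w_{\min}+o}{2}$, and your counting then gives $1+\frac{2(1-\alpha)}{\alpha}$. The paper's Case 3 glosses over the same point: it combines $w_1=\max_d y_{d(1)}$ with the ordering $w_1\le w_2$, which is not without loss of generality.

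Separately, your dichotomy ``$\mathbf{w}=\mathbf{o}$ or some $o$ lies outside $[w_{\min},w_{\max}]$'' misses profiles such as $\mathbf{o}=(w_{\min},w_{\min})$ that duplicate candidates allow. Your two arguments extend to them, since a spare copy at one endpoint forces $t(\alpha_{d^*})$ to be the other endpoint, but this should be said.
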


\begin{proof}
    Given any instance $I$, let \( \mathbf{w}= (w_1, w_2) \) be the location profile chosen by the mechanism and \( \mathbf{o} = (o_1, o_2) \) be an optimal solution. Assume w.l.o.g. that \( w_1 \leq w_2 \) and \( o_1 \leq o_2 \).  By the property of the mechanism, there exists a group $d^*$ such that \( w_1 = y_{d^*(1)} = t({\alpha_{d^*}}) \) and $w_2 = y_{d^*(2)} =s(\alpha_{d^*})$.
Let \( d' \) denote the group with the maximum average individual cost under \( \mathbf{w} \), so that $MoA(\mathbf{w})=\frac{1}{n_{d'}}\sum_{i\in N_{d'}}\delta(x_i,w(x_i) )$.  Let $N_{d'(1)}=\{i\in N_{d'}|w(x_i)=w_1\}$ and $N_{d'(2)}=\{i\in N_{d'}|w(x_i)=w_2\}$.

\medskip
\noindent
\textbf{Case 1}: \( o_1 \leq o_2 < w_1 \leq w_2 \). By the definition of \( d' \) and the triangle inequality, we have
\begin{align}
       MoA(\mathbf{w})
         &=\frac{1}{n_{d'}}\sum_{i \in N_{d'(1)} } \delta(x_i, w_1) + \frac{1}{n_{d'}}\sum_{i \in N_{d'(2)} } \delta(x_i, w_2) \nonumber\\
             &\leq\frac{1}{n_{d'}} \sum_{i \in N_{d'} } \delta(x_i, o(x_i)) + \frac{1}{n_{d'}}\sum_{i \in N_{d'(1)} } \delta(o_1, w_1) + \frac{1}{n_{d'}}\sum_{i \in N_{d'(2)} } \delta(o_2, w_2) \nonumber\\  
            &\leq MoA(\mathbf{o}) + \delta(o_1, w_2). \label{13}
\end{align}
Let $S=\left\{i\in N_{d^*}|x_i\ge x_{\alpha_{d^*}}\right\}$. By the definition of $\alpha_{d^*}$, $\left|S\right|\ge (1-\alpha)\cdot n_{d^*}$. Since \( o_1 \leq o_2 < w_1 \leq w_2 \) and $w_1$, $w_2$ are the closest  locations to $\alpha_{d^*}$, all agents in $S$ are closer to $\mathbf{w}=(w_1,w_2)$ than $\mathbf{o}=(o_1, o_2)$. So we have $\delta(x_i,o_2)\ge \frac{\delta(o_2,w_2)}{2}$ and $\delta(x_i,o_1)\ge \delta(o_1,o_2)+\frac{\delta(o_2,w_2)}{2}$ for all $i\in S$.
Using these properties, we have
\begin{equation*}
    \begin{aligned}
MoA(\mathbf{o})&\ge\frac{1}{n_{d^*}}\sum_{i \in N_{d^*} } \delta(x_i, o(x_i))\ge\frac{1}{n_{d^*}}\sum_{i \in S } \delta(x_i, o_1) \\
             &\ge(1-\alpha)\cdot\left[\frac{\delta(o_2,w_2)}{2}+\delta(o_1,o_2)\right],
\end{aligned}
\end{equation*}
and
\begin{equation*}
    \begin{aligned}
       MoA(\mathbf{o})\ge\frac{1}{n_{d^*}}\sum_{i \in S } \delta(x_i, o_2) 
            \ge(1-\alpha)\cdot\frac{\delta(o_2,w_2)}{2}.
\end{aligned}
\end{equation*}
Then, it follows that
\begin{equation}\label{16}
    \begin{aligned} 
MoA(\mathbf{o})&\ge\frac{(1-\alpha)}{2}\cdot\delta(o_1,w_2).
\end{aligned}
\end{equation}
Combining (\ref{13}) and (\ref{16}), we obtain
\begin{equation*}
    \begin{aligned} 
       MoA(\mathbf{w})&\le MoA(\mathbf{o})+\delta(o_1,w_2)\le \left(1+\frac{2}{1-\alpha}\right)\cdot MoA(\mathbf{o}).
\end{aligned}
\end{equation*}

\medskip
\noindent
\textbf{Case 2}: \( o_1 <o_2 = w_1 < w_2 \). Similar to Case 1, we obtain
\begin{equation*}
    \begin{aligned} 
       MoA(\mathbf{w})\le \left(1+\frac{2}{1-\alpha}\right)\cdot MoA (\mathbf{o}).
\end{aligned}
\end{equation*}

\medskip
\noindent
\textbf{Case 3}: \( w_1\le w_2 <o_1 \le  o_2 \). Let $L$ be the set of agents in $N_{d'}$ from the first leftmost agent to $\alpha_{d'}$ and $R$ be the set of remaining agents in $N_{d'}$.
 By the definition of $\alpha_{d'}$, we have $\left|L\right|=\alpha\cdot n_{d'}$ and $\left|R\right|=(1-\alpha)\cdot n_{d'}$. Since $w_1=\max\left\{y_{d(1)}\right\}_{d\in D}$, it follows that $y_{d'(1)}\le w_1\le w_2<o_1\le o_2$. As $y_{d'(1)}$ is the closest location to $\alpha_{d'}$, we have 
$\delta(x_i,w(x_i))\le \delta(x_i,o(x_i))$ for all $i\in L$. Using these properties, we obtain
\begin{align}
MoA(\mathbf{w}) &= \frac{1}{n_{d'}}\sum_{i \in L} \delta(x_i, w(x_i)) + \frac{1}{n_{d'}}\sum_{i \in R} \delta(x_i, w(x_i)) \nonumber\\ 
&\leq \frac{1}{n_{d'}}\sum_{i \in L} \delta(x_i, o(x_i)) +\frac{1}{n_{d'}} \sum_{i \in R \cap N_{d'(1)}} \delta(x_i,w_1)+ \frac{1}{n_{d'}}\sum_{i \in R \cap N_{d'(2)}} \delta(x_i,w_2) \nonumber\\
&\leq \frac{1}{n_{d'}}\sum_{i \in L} \delta(x_i, o(x_i)) + \frac{1}{n_{d'}}\sum_{i \in R \cap N_{d'(1)}} [\delta(x_i,o_1)+\delta(o_1,w_1)] \nonumber \\ 
&+ \frac{1}{n_{d'}}\sum_{i \in R \cap N_{d'(2)}} [\delta(x_i, o_2) + \delta(o_2, w_2)] \nonumber \\ 
&\leq MoA(\mathbf{o}) + (1 - \alpha) \cdot \delta(o_2, w_1). \label{19}
\end{align}
Since all agents in $L$ are closer to $\mathbf{w}=(w_1,w_2)$ than $\mathbf{o}=(o_1,o_2)$,  we have $\delta(x_i,o_2)\ge \frac{\delta(w_1,w_2)}{2}+\delta(o_2,w_2)$ (when $w_1\ne w_2$) and $\delta(x_i,o_2)\ge \frac{\delta(o_1,w_2)}{2}+\delta(o_1,o_2)$ (when $w_1=w_2$) for all $i\in L$. Thus,
\begin{align}
 MoA(\mathbf{o})&\ge\frac{1}{n_{d'}}\sum_{i \in N_{d'} } \delta(x_i, o(x_i))\ge\frac{1}{n_{d'}}\sum_{i \in L } \delta(x_i, o_2) \nonumber \\
             &\ge\alpha\cdot\left[\frac{\delta(w_1,w_2)+\delta(o_2,w_2)}{2} \right]\ge \frac{\alpha}{2}\cdot\delta(o_2,w_1). \label{20}
\end{align}
Combining (\ref{19}) and (\ref{20}), we obtain
\begin{equation*}
    \begin{aligned} 
       MoA(\mathbf{w})\le MoA(\mathbf{o})+(1-\alpha)\cdot\delta(o_2,w_1)\le \left(1+\frac{2(1-\alpha)}{\alpha}\right)\cdot MoA(\mathbf{o}).
\end{aligned}
\end{equation*}

\medskip
\noindent
\textbf{Case 4}: \(  w_1 <w_2=o_1\le o_2 \). Similar to Case 3, we obtain
\[MoA(\mathbf{w})\le \left(1+\frac{2(1-\alpha)}{\alpha}\right)\cdot MoA(\mathbf{o}).\]

\medskip
\noindent
\textbf{Case 5}: \(o_1< w_1\le w_2\le o_2 \). By the definition of $d'$ and the triangle inequality, we have
\begin{align}
MoA(\mathbf{w})&=\frac{1}{n_{d'}}\sum_{i \in N_{d'(1)} } \delta(x_i, w_1) + \frac{1}{n_{d'}}\sum_{i \in N_{d'(2)} } \delta(x_i, w_2)\nonumber\\
 &\leq MoA(\mathbf{o}) + \delta(o_1, o_2).\label{MoA9}
\end{align}
Clearly, $MoA(\mathbf{o})\ge \frac{1}{n_{d'}}\sum_{i \in N_{d'} } \delta(x_i, o_1)$ and $MoA(\mathbf{o})\ge\frac{1}{n_{d'}} \sum_{i \in N_{d'} } \delta(x_i, o_2)$. Adding these two inequalities and applying the triangle inequality again, we have
\begin{equation}\label{MoA10}
    \begin{aligned}
MoA(\mathbf{o})&\ge\frac{1}{2n_{d'}}\cdot\sum_{i \in N_{d'} } [\delta(x_i, o_1)+\delta(x_i,o_2)]\ge\frac{1}{2}\cdot\delta(o_1,o_2)
\end{aligned}
\end{equation}
Combining (\ref{MoA9}) and (\ref{MoA10}), we obtain $MoA(\mathbf{w})\le 3\cdot MoA(\mathbf{o}).$

\medskip
\noindent
\textbf{Case 6}: \(o_1=w_1\le w_2<o_2 \). Similar to Case 5, we obtain $MoA(\mathbf{w})\le 3\cdot MoA(\mathbf{o}).$

\medskip
\noindent
Combining all cases, an upper bound of the distortion is given by 
\[
\max\left\{ 1 + \frac{2(1-\alpha)}{\alpha},\ 1 + \frac{2}{1-\alpha},\ 3 \right\} = \max\left\{ 1 + \frac{2(1-\alpha)}{\alpha},\ 1 + \frac{2}{1-\alpha} \right\}.
\]
\end{proof}

By solving the equation $1 + \frac{2(1 - \alpha)}{\alpha} = 1 + \frac{2}{1 - \alpha}$, we establish the following theorem.

\begin{theorem}\label{MoAupper}
For the Max-of-Average cost, the distortion of the $(\frac{3 - \sqrt{5}}{2},1)$-Quantile Mechanism is at most \( 2 + \sqrt{5} \).  
\end{theorem}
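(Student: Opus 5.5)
The plan is to derive Theorem~\ref{MoAupper} directly from the preceding lemma by choosing $\alpha$ to balance its two bounds. That lemma shows that for every $\alpha\in(0,1)$ the $(\alpha,1)$-Quantile Mechanism has Max-of-Average distortion at most
\[
g(\alpha)=\max\left\{ 1+\frac{2(1-\alpha)}{\alpha},\ 1+\frac{2}{1-\alpha}\right\}.
\]
Since the $(\alpha,1)$-Quantile Mechanism is a special case of the $(\alpha,\beta)$-Quantile Mechanism, it is strategyproof by the earlier theorem. So the only remaining task is to evaluate $g$ at $\alpha=\frac{3-\sqrt5}{2}$ and check that the value is $2+\sqrt5$.

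First, I will note monotonicity. The term $1+\frac{2(1-\alpha)}{\alpha}=\frac{2}{\alpha}-1$ is decreasing in $\alpha$. The term $1+\frac{2}{1-\alpha}$ is increasing in $\alpha$. Hence the maximum is minimized where the two terms are equal, namely when $\frac{1-\alpha}{\alpha}=\frac{1}{1-\alpha}$. This is equivalent to $(1-\alpha)^2=\alpha$, that is, $\alpha^2-3\alpha+1=0$. The root in $(0,1)$ is $\alpha=\frac{3-\sqrt5}{2}$. Strictly speaking, only the evaluation at this point is needed for the stated upper bound; monotonicity just explains why this choice is optimal within the family.

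Second, I will substitute. We have $1-\alpha=\frac{\sqrt5-1}{2}$, so $\frac{2}{1-\alpha}=\frac{4}{\sqrt5-1}=\sqrt5+1$. Therefore $1+\frac{2}{1-\alpha}=2+\sqrt5$. As a check, $\frac{2}{\alpha}-1=\frac{4}{3-\sqrt5}-1=(3+\sqrt5)-1=2+\sqrt5$ as well. Both terms coincide, so $g\bigl(\frac{3-\sqrt5}{2}\bigr)=2+\sqrt5$.

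There is no real obstacle here beyond the algebra. The only point to confirm is that the lemma's case analysis is valid for this irrational $\alpha$. The lemma uses $|L|=\alpha n_{d'}$ and $|S|\ge(1-\alpha)n_{d^*}$. With ceilings, $|S|\ge(1-\alpha)n_{d^*}$ still holds, since $|S|=n_{d^*}-\lceil\alpha n_{d^*}\rceil+1$. For $L$, it holds that $|L|=\lceil\alpha n_{d'}\rceil\ge\alpha n_{d'}$ and $|R|\le(1-\alpha)n_{d'}$, and these are exactly the inequality directions that Case~3 needs. So I would remark that the lemma applies verbatim and conclude.
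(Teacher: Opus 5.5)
Your proposal is correct and follows the paper's route: the paper also obtains the theorem directly from the preceding lemma, by solving $1+\frac{2(1-\alpha)}{\alpha}=1+\frac{2}{1-\alpha}$ to get $\alpha=\frac{3-\sqrt5}{2}$ and the common value $2+\sqrt5$. Your extra check that the ceiling only helps ($|L|\ge\alpha n_{d'}$, $|R|\le(1-\alpha)n_{d'}$, $|S|\ge(1-\alpha)n_{d^*}$) fixes a small imprecision in the lemma's proof, which writes these as equalities even though they cannot hold exactly for irrational $\alpha$.
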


\section{Average-of-Max cost}

In this section, we study the Average-of-Max cost, defined as the average of the maximum individual costs within each group. For the lower bound, we prove that the distortion of any strategyproof mechanism is at least $3-\epsilon$ for any $\epsilon>0$. For the upper bound, we analyze the $(1,\frac{3-\sqrt{5}}{2})$-Quantile Mechanism, which achieves a distortion of at most $2+\sqrt{5}$. 

\begin{theorem}\label{AoMlower}
For the Average-of-Max cost, the distortion of any strategyproof mechanism is at least $3-\epsilon$, for any $\epsilon>0$.
\end{theorem}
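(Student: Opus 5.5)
We argue by contradiction, in the same style as Theorems \ref{AoAlower} and \ref{MoMlower}. Suppose some strategyproof mechanism $M$ has distortion strictly smaller than $3-\epsilon$, and let $\theta>0$ be sufficiently small. We take the candidate set $A=\{0,0,1,1\}$ and build single-group instances in which strategyproofness forces the representatives. Then we combine two groups so that (P1) and (P2) make $M$ unable to tell apart two two-group instances whose optimal solutions differ.

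\textbf{Single-group instances.} With one agent per group, the Average-of-Max cost of a single group equals that agent's cost, so these steps are identical to those in Theorem \ref{AoAlower}.
\begin{itemize}
\item[(i)] Instance $I_1$ has one agent at $0$. Then $M(I_1)=(0,0)$, since $(0,0)$ has cost $0$ and any other output gives unbounded distortion.
\item[(ii)] Instance $I_2$ has one agent at $\frac12-\theta$. If $M(I_2)\neq(0,0)$, her cost is $\frac12+\theta$, whereas reporting $0$ gives $\frac12-\theta$. Strategyproofness therefore forces $M(I_2)=(0,0)$.
\item[(iii)] By symmetry, the instance with one agent at $1$ and the instance with one agent at $\frac12+\theta$ both output $(1,1)$.
\end{itemize}

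\textbf{Two-group instances.} Instance $I_5$ has group 1 with one agent at $\frac12-\theta$ and group 2 with one agent at $1$. By (P1), the representatives are $(0,0)$ and $(1,1)$, so the possible outputs are $(0,0)$, $(0,1)$, $(1,0)$ and $(1,1)$. We compute the AoM cost of each, where the factor $\frac12$ comes from averaging over the two groups:
\begin{itemize}
\item $AoM(0,0)=\frac12\bigl(\frac12-\theta+1\bigr)$;
\item $AoM(0,1)=AoM(1,0)=\frac12\bigl(\frac12+\theta+1\bigr)$;
\item $AoM(1,1)=\frac12\bigl(\frac12+\theta\bigr)$.
\end{itemize}
The optimum is at most $AoM(1,1)$, so any output other than $(1,1)$ has ratio at least $\frac{3/2-\theta}{1/2+\theta}\ge 3-\epsilon$ for small $\theta$. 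Hence $M(I_5)=(1,1)$.

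Instance $I_6$ has group 1 with one agent at $\frac12+\theta$ and group 2 with one agent at $0$. By (P1), its representatives are again $(1,1)$ and $(0,0)$, the same multiset of group representatives as in $I_5$. By (P2), $M(I_6)=(1,1)$. But $AoM(1,1\mid I_6)=\frac12\bigl(\frac12-\theta+1\bigr)$ while $AoM(0,0\mid I_6)=\frac12\bigl(\frac12+\theta\bigr)$, giving distortion at least $\frac{3/2-\theta}{1/2+\theta}\ge 3-\epsilon$. This is the contradiction.

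\textbf{Main obstacle.} The delicate point is applying (P2) across $I_5$ and $I_6$. Group indices are swapped between the two instances, so we need the second stage to depend only on the multiset of representatives. If instead (P2) is read as depending on the ordered tuple of group representatives, we relabel the groups in $I_6$ so that the group with representatives $(0,0)$ comes first in both instances; the groups are unlabeled sets of agents, so this relabeling is harmless.

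Beyond that, the proof only requires checking that the AoM cost coincides with the per-agent cost in single-agent groups, and choosing $\theta\le\frac{\epsilon}{8-2\epsilon}$ as in Theorem \ref{AoAlower}.
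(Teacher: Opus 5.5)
Your proof is correct and follows essentially the same route as the paper: the same candidate set $A=\{0,0,1,1\}$, single-group instances pinned down by bounded distortion and strategyproofness, and the (P1)/(P2) indistinguishability of $I_5$ and $I_6$. The only difference is in the group with representatives $(1,1)$ in $I_6$. You use a single agent at $\frac12+\theta$, so AoM coincides with AoA and the instances of Theorem~\ref{AoAlower} carry over verbatim. The paper instead uses a two-agent group at $\frac12$ and $\frac12+\theta$, derived from the instance with agents at $\frac12$ and $\frac32$. Both give a ratio tending to $3$.
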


\begin{proof}
We proceed by contradiction. Assume there exists a strategyproof mechanism $M$ with distortion strictly smaller than $3 - \epsilon$ for some $\epsilon > 0$, and let $\theta > 0$ be an infinitesimal. We derive a contradiction by analyzing the following instances, with the set of candidate locations $A = \{0, 0, 1, 1\}$.

\medskip
\noindent
\textbf{(Instance $I_1$)}
Consider an instance $I_1$ with a single group containing two agents located at $\frac{1}{2}$ and $\frac{3}{2}$.
We compute the AoM costs for all candidate facility pairs and obtain that $OPT(I_1) = (1,1)$, so $M(I_1) = (1,1)$ must hold; otherwise, the distortion of $M$ would be at least $3$, contradicting our assumption.

\medskip
\noindent
\textbf{(Instance $I_2$)}
Consider an instance $I_2$ with a single group containing two agents, with $x^2_1 = \frac{1}{2}$ and $x^2_2 = \frac{1}{2} + \theta$.
If $M(I_2) \neq (1,1)$, then $c(x^2_2, M(I_2)) = \frac{1}{2} + \theta$.
Since $c(x^2_2, M(I_1)) = \frac{1}{2} - \theta$, the agent at $x^2_2$ could strictly reduce her cost by misreporting her location as $\frac{3}{2}$, violating strategyproofness.
Thus, $M(I_2) = (1,1)$.

\medskip
\noindent
\textbf{(Instance $I_3$)}
Consider an instance $I_3$ with a single group containing one agent, with $x^3_1 = 0$.
By strategyproofness, the mechanism $M$ must output $(0,0)$ as the group's representative, i.e., $M(I_3) = (0,0)$; otherwise, the distortion of $M$ would be infinite.

\medskip
\noindent
\textbf{(Instance $I_4$)}
Consider an instance $I_4$ with a single group containing one agent, with $x^4_1 = \frac{1}{2} - \theta$.
If $M(I_4) \neq (0,0)$, then $c(x^4_1, M(I_4)) = \frac{1}{2} + \theta$.
Since $c(x^4_1, M(I_3)) = \frac{1}{2} - \theta$, the agent at $x^4_1$ could strictly reduce her cost by misreporting her location as $0$, violating strategyproofness.
Thus, $M(I_4) = (0,0)$.

\medskip
\noindent
\textbf{(Instance $I_5$)}
Consider an instance $I_5$ with two groups: group 1 contains one agent at $\frac{1}{2} - \theta$, whose representative is $(0,0)$ by $I_4$, and group 2 contains one agent at $1$, whose representative is $(1,1)$ by the same argument as $I_3$.
We compute the AoM costs for all possible facility location pairs:
\[
AoM(0,0) = \frac{3}{4} - \frac{\theta}{2}, \quad AoM(0,1) = AoM(1,0) = \frac{3}{4} + \frac{\theta}{2}, \quad AoM(1,1) = \frac{1}{4} + \frac{\theta}{2}.
\]
Thus, $M(I_5) = (1,1)$.
Otherwise, the distortion of $M$ would be at least $\frac{\frac{3}{4} - \frac{\theta}{2}}{\frac{1}{2} + \theta} \geq 3 - \epsilon$ for $\theta \leq \frac{\epsilon}{8 - 2\epsilon}$, which contradicts our assumption.

\medskip
\noindent
\textbf{(Instance $I_6$)}
We now reach a contradiction by considering instance $I_6$ with two groups: group 1 contains two agents at $\frac{1}{2}$ and $\frac{1}{2} + \theta$, whose representative is $(1,1)$ by $I_2$, and group 2 contains one agent at $0$, whose representative is $(0,0)$ by $I_3$.
By $I_5$ and $(P2)$, we conclude that $M(I_6) = (1,1)$.
We compute the AoM costs for all facility location pairs:
\[
AoM(0,0) = \frac{1}{4} + \frac{\theta}{2}, \quad AoM(0,1) = AoM(1,0) = \frac{3}{4} + \frac{\theta}{2}, \quad AoM(1,1) = \frac{3}{4}.
\]
Thus, the distortion of $M$ is at least $\frac{\frac{3}{4}}{\frac{1}{2} + \theta} \geq 3 - \epsilon$ for $\theta \leq \frac{\epsilon}{8 - 2\epsilon}$, which is a contradiction.
\end{proof}

We next present the $(1,\beta)$-Quantile Mechanism and analyze its performance.

\medskip
\noindent 
\textbf{$\bm{(1,\beta)}$-Quantile Mechanism} 

\begin{itemize}
\item[$\bullet$] Step 1. For each group $d\in D$ , set $y_{d(1)}=t({r_d})$, $y _{d(2)}=s({r_d})$.

\item[$\bullet$] Step 2. Initialize $z_d=y_{d(1)}$ for all  $d\in D$ and let $w_1$ be the $\lceil\beta\cdot k\rceil$-th leftmost location in $\{z_d\}_{d\in D}$. For each group  $d\in D$, update $z_d$ to $y_{d(2)}$ if $y_{d(1)}=w_1$, then set $w_2$ to be the $\lceil\beta\cdot k\rceil$-th  leftmost location in $\{z_d\}_{d\in D}$.
\end{itemize} 

\begin{lemma}
    For the Average-of-Max cost, the distortion of the $(1,\beta)$-Quantile mechanism is at most $\max \left\{ 1+\frac{2(1-\beta)}{\beta}, 1 + \frac{2}{1 - \beta}\right\}$.
\end{lemma}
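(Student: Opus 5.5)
We mirror the proof of the Max-of-Average lemma, with the roles of $\alpha$ (within groups) and $\beta$ (across groups) exchanged.

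\textbf{Setup.} Fix an instance $I$, let $\mathbf{w}=(w_1,w_2)$ be the output with $w_1\le w_2$, and let $\mathbf{o}=(o_1,o_2)$ be an optimal solution with $o_1\le o_2$. By Remark~\ref{remark2} there is a group $d^*$ with $w_1=t(r_{d^*})$ and $w_2=s(r_{d^*})$. Write $AoM(\mathbf{w})=\frac1k\sum_d c(x_{i_d},\mathbf{w})$, where $i_d$ is the agent of group $d$ attaining the maximum cost under $\mathbf{w}$. Since $w_1,w_2$ are consecutive candidates, no $o_j$ lies strictly between them. We therefore use the same six cases as before: $\mathbf{o}$ entirely left of $\mathbf{w}$ (possibly sharing an endpoint), entirely right, or straddling.

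\textbf{Upper bound on $AoM(\mathbf{w})$.} For every group, the triangle inequality gives
\[
c(x_{i_d},\mathbf{w})\le \delta(x_{i_d},o(x_{i_d}))+\Delta,
\]
where $\Delta$ is the relevant distance between the facility pairs. In the left case $\Delta=\delta(o_1,w_2)$; in the straddling case $\Delta=\delta(o_1,o_2)$. So $AoM(\mathbf{w})\le AoM(\mathbf{o})+\Delta$, or a refined version of this in which only some groups pay $\Delta$.

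\textbf{Straddling case.} Averaging the two facility distances of any single group gives $AoM(\mathbf{o})\ge \frac12\delta(o_1,o_2)$, so the ratio is at most $3$.

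\textbf{Case where $\mathbf{o}$ is right of $\mathbf{w}$.} By the $\beta$-quantile choice in Step 2, at least $(1-\beta)k$ groups have $y_{d(1)}$ weakly right of $w_1$; after the replacement step, the same holds for $w_2$. For such a group, $r_d$ has its two nearest candidates at or right of $\mathbf{w}$, so $r_d$ lies to the right of the midpoint of $w_1$ and $o_2$. Hence $\delta(x_{r_d},o_1)\ge \frac12\delta(w_1,o_2)$, suitably adjusted, and $\max_{i\in N_d}c(x_i,\mathbf{o})\ge\frac12\delta(w_1,o_2)$. This gives $AoM(\mathbf{o})\ge\frac{1-\beta}{2}\delta(w_1,o_2)$, and combined with $AoM(\mathbf{w})\le AoM(\mathbf{o})+\delta(w_1,o_2)$ the ratio is at most $1+\frac{2}{1-\beta}$.

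\textbf{Case where $\mathbf{o}$ is left of $\mathbf{w}$.} At least $\beta k$ groups have $y_{d(1)}\le w_1$. Their rightmost agents, and hence all their agents, are closer to $\mathbf{w}$ than to $\mathbf{o}$ in the sense that $c(x_i,\mathbf{w})\le c(x_i,\mathbf{o})$. So these groups contribute no excess, and only the remaining at most $(1-\beta)k$ groups pay $\Delta=\delta(o_1,w_2)$. This gives $AoM(\mathbf{w})\le AoM(\mathbf{o})+(1-\beta)\delta(o_1,w_2)$. Conversely, these $\beta k$ groups have rightmost agents at least $\frac12\delta(o_1,w_2)$ from the farther optimal facility, which gives $AoM(\mathbf{o})\ge\frac{\beta}{2}\delta(o_1,w_2)$. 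The ratio is therefore at most $1+\frac{2(1-\beta)}{\beta}$.

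\textbf{Main obstacle.} The delicate step is the counting in Step 2. Because of the replacement $z_d\leftarrow y_{d(2)}$, the quantile position of $w_2$ relative to the $y_{d(1)}$'s has to be tracked. The groups whose first representative was replaced sit exactly at $w_1$ and now point to $w_2$, and we must check that they are still counted on the correct side in each case. Handling the ties where $o_j$ coincides with an endpoint of $\mathbf{w}$ needs the same care. In each such check we use that the pairs of different groups are never interlaced, so any group whose pair touches $\{w_1,w_2\}$ has a rightmost agent between the two relevant midpoints.

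Finally, taking the maximum over all cases gives the stated bound, since $3$ is dominated by the other two terms for every $\beta\in(0,1)$.
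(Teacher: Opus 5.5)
Your two one-sided cases are attached to the wrong configurations. The key claim in each case is false as stated. The final bound only comes out right because the maximum of the two expressions does not depend on which case produced which term.

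The asymmetry comes from the representative being chosen by the \emph{rightmost} agent $r_d$. Every agent of group $d$ lies weakly left of $x_{r_d}$. So the step ``$r_d$ prefers $\mathbf{w}$, hence the whole group does'' is valid only when $\mathbf{w}$ lies to the \emph{left} of $\mathbf{o}$.

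\textbf{Your case ``$\mathbf{o}$ left of $\mathbf{w}$''.} Groups with $y_{d(1)}\le w_1$ can sit on top of $\mathbf{o}$. Take $A=\{0,0,9,10\}$, $k=10$, $\beta=\frac{9}{10}$, eight singleton groups at $0$ and two at $9.6$. The mechanism outputs $(10,9)$ with $AoM=8.12$, while $(0,0)$ gives $1.92$. The groups at $0$ have $c(0,\mathbf{w})=10>0=c(0,\mathbf{o})$. The ratio is about $4.23$, which exceeds your claimed $1+\frac{2(1-\beta)}{\beta}=\frac{11}{9}$.

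\textbf{Your case ``$\mathbf{o}$ right of $\mathbf{w}$''.} Groups with $y_{d(1)}\ge w_1$ can sit exactly at $o_1=o_2$ and have optimal cost $0$. So the per-group bound $\max_{i\in N_d}c(x_i,\mathbf{o})\ge\frac12\delta(w_1,o_2)$ fails. The claim ``$r_d$ is right of the midpoint of $w_1$ and $o_2$'' neither follows nor, even if true, bounds $\delta(x_{r_d},o_1)$. Take $A=\{0,1,10,10\}$, $k=10$, $\beta=\frac1{10}$, one singleton group at $0.4$ and nine at $10$. The output is $(0,1)$ with $AoM=9.06$, while $(10,10)$ gives $0.96$. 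This violates $AoM(\mathbf{o})\ge\frac{1-\beta}{2}\delta(w_1,o_2)=4.5$, and the ratio of about $9.4$ exceeds your claimed $1+\frac{2}{1-\beta}=\frac{29}{9}$.

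The paper pairs your two ingredients the other way around. Throughout, $w_1\le w_2$ and $o_1\le o_2$.

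\textbf{$\mathbf{o}$ right of $\mathbf{w}$.}
\begin{itemize}
\item At least $\beta k$ groups have a representative weakly left of $w_1$ that is no farther from $x_{r_d}$ than $o_1$. Hence $x_{r_d}\le\frac{w_1+o_1}{2}$.
\item All agents of these groups lie weakly left of $x_{r_d}$, so they satisfy $c(x_i,\mathbf{w})\le c(x_i,\mathbf{o})$.
\item Only the remaining at most $(1-\beta)k$ groups pay the excess $\delta(w_1,o_2)$.
\item The bound $\delta(x_{r_d},o_2)\ge\frac12\delta(w_1,o_2)$ gives $AoM(\mathbf{o})\ge\frac{\beta}{2}\delta(w_1,o_2)$.
\end{itemize}
This yields the ratio $1+\frac{2(1-\beta)}{\beta}$.

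\textbf{$\mathbf{o}$ left of $\mathbf{w}$.}
\begin{itemize}
\item No group can be certified excess-free, since agents left of $r_d$ may be near $\mathbf{o}$. So you only have $AoM(\mathbf{w})\le AoM(\mathbf{o})+\delta(o_1,w_2)$.
\item At least $(1-\beta)k$ groups, counting the replaced representatives, have a representative weakly right of $w_2$ that is no farther from $x_{r_d}$ than $o_2$. So $x_{r_d}\ge\frac{o_2+w_2}{2}$.
\item This gives $\delta(x_{r_d},o_1)\ge\frac12\delta(o_1,w_2)$, hence $AoM(\mathbf{o})\ge\frac{1-\beta}{2}\delta(o_1,w_2)$.
\end{itemize}
This yields the ratio $1+\frac{2}{1-\beta}$.

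A faithful mirror of the Max-of-Average proof keeps exactly this orientation: there too, $\mathbf{o}$ on the left produces the $\frac{2}{1-\alpha}$ term. So the swap is an error in your mirroring, not a consequence of exchanging $\alpha$ and $\beta$.
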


\begin{proof}
    Given any instance $I$, let \( \mathbf{w}= (w_1, w_2) \) be the solution chosen by the mechanism and \( \mathbf{o} = (o_1, o_2) \) be an optimal solution. W.l.o.g., we assume that \( w_1 \leq w_2 \) and \( o_1 \leq o_2 \).  By the property of the mechanism, there exists a group $d^*$ such that \( w_1 = y_{d^*(1)} = t(r_{d^*}) \) and $w_2 = y_{d^*(2)} = s(r_{d^*})$. 
For each group $d$, let $i_d$ and $i'_{d}$ denote the agents in $N_d$ with the maximum individual cost under $\mathbf{w}$ and $\mathbf{o}$, respectively. Then we have 
\begin{equation*}
    \begin{aligned}
       &AoM(\mathbf{w}) =\frac{1}{k} \sum_{d \in D} \left\{\max_{i \in N_d} \delta(x_i, w(x_i))\right\} = \frac{1}{k}\sum_{d \in D} \delta(x_{i_{d}}, w(x_{i_{d}})),\\
      & AoM(\mathbf{o}) = \frac{1}{k}\sum_{d \in D}\left\{ \max_{i \in N_d} \delta(x_i, o(x_i))\right\} = \frac{1}{k}\sum_{d \in D} \delta(x_{i_{d'}}, o(x_{i'_{d}})).
\end{aligned}
\end{equation*}

Let $D_1=\{d\in D|w(x_{i_d})=w_1\}$ and $D_2=\{d\in D|w(x_{i_d})=w_2\}$. 

\medskip
\noindent
\textbf{Case 1}: \( o_1 \leq o_2 < w_1 \leq w_2 \).  Using the triangle inequality, and since \( \delta(x_{i_d}, o(x_{i_{d}})) \leq \delta(x_{i'_{d}}, o(x_{i'_{d}})) \), we have  \begin{equation}\label{25}
    \begin{aligned}
       AoM(\mathbf{w})& = \frac{1}{k}\sum_{d \in D_1} \delta(x_{i_d}, w_1) +\frac{1}{k} \sum_{d \in D_2} \delta(x_{i_d}, w_2) \\
       &\leq \frac{1}{k}\sum_{d \in D} \delta(x_{i_d}, o(x_{i_d})) + \frac{|D_1|}{k} \cdot \delta(o_1, w_1) +\frac{|D_2|}{k}  \cdot \delta(o_2, w_2)\\
       &\leq AoM(\mathbf{o}) +  \delta(o_1, w_2).
\end{aligned}
\end{equation}
 Let $S=\{ d\in D\mid y_{d(1)}\ge w_1, y_{d(2)}\ge w_1\}$.
 For each group $d\in D$, since $y_{d(1)}$, $y_{d(2)}$ are consecutive, \( w_1 \) is the  $\lceil\beta\cdot k\rceil$-th leftmost location in \( \{ y_{d(1)} \}_{d \in D} \) and \( w_2 \) is the$\lceil\beta\cdot k\rceil$-th leftmost location in the  updated \( \{ z_{d} \}_{d \in D}\), we have  \( |S| \geq (1 - \beta) \cdot k \). Since \( w_1 \) and \( w_2 \) are the closest locations to \( x_{r_{d^*}} \), it holds that \( x_{r_{d^*}}\ge  \frac{o_2 + w_2}{2} \). Then, we have $\delta(x_{r_d}, o_1) \geq \delta(o_1, o_2) + \frac{\delta(o_2, w_2)}{2}$ for all \( d \in S \). Hence,
\begin{equation}\label{26}
    \begin{aligned}
       AoM(\mathbf{o})\ge \frac{1}{k}\sum_{d \in S} \delta(x_{r_d}, o_1)  \ge \frac{(1-\beta)}{2}\cdot\delta(o_1,w_2).
\end{aligned}
\end{equation}
Combining (\ref{25}) and (\ref{26}), we have 
\begin{equation*}
    \begin{aligned}
       AoM(\mathbf{w})\leq AoM(\mathbf{o}) +  \delta(o_1, w_2) \le \left(1+\frac{2}{1-\beta}\right) \cdot AoM(\mathbf{o}).   
\end{aligned}
\end{equation*}

\medskip
\noindent
\textbf{Case 2}: \( o_1 < o_2 = w_1 <w_2 \). Similar to Case 1, we obtain $AoM(\mathbf{w})\le \left(1+\frac{2}{1-\beta}\right)\cdot AoM(\mathbf{o})$.

\medskip
\noindent
\textbf{Case 3}: \( w_1 \le w_2 < o_1 \le o_2 \). Let \( L \) denote the set of groups from the leftmost group in \( \{ y_{d(1)} \}_{d \in D} \) to the group at the $\lceil\beta\cdot k\rceil$-th position in \( \{ y_{d(1)} \}_{d \in D} \) and \( R \) the set of  remaining  groups. Then, for every \( d \in L \), $i\in N_d$, since \( y_{d(1)} \) is the closest location to \( x_{r_d} \) and \( y_{d(1)} \leq w_1 \le w_2< o_1 \leq o_2 \),  we have $\delta(x_i, w(x_i)) = \delta(x_i,w_2)\le  \delta(x_i, o(x_i)) = \delta(x_i, o_2)$.
By the triangle inequality, we have  
\begin{align}
      AoM(\mathbf{w})& =\frac{1}{k} \sum_{d \in L} \delta(x_{i_d}, w_2) +\frac{1}{k} \sum_{d \in R} \delta(x_{i_d}, w(x_{i_d}))\nonumber \\
      &\le\frac{1}{k}\sum_{d \in L} \delta(x_{i_d}, o(x_{i_d})) +\frac{1}{k} \sum_{d \in R\cap D_1} \delta(x_{i_d}, w_1) +\frac{1}{k} \sum_{d \in R\cap D_2} \delta(x_{i_d}, w_2) \nonumber  \\
      &\le AoM(\mathbf{o})+(1-\beta) \cdot \delta(o_2, w_1).\label{28}
\end{align}
Since \( w_1 \) and \( w_2 \) are the closest  locations to \( x_{r_{d^*}} \), it holds that 
$x_{r_{d^*}} \leq \frac{w_1 + w_2}{2} \quad \text{(when \( w_1 \neq w_2 \))} \quad \text{or} \quad x_{r_{d^*}} \leq \frac{w_2 + o_1}{2} \quad \text{(when \( w_1 = w_2 \))}$.  Then, for every \( d \in L \), we have   $\delta(x_{r_d}, o_2) \geq  \frac{\delta(w_2, o_2) +\delta(w_1, w_2)}{2}.$
Hence,  
\begin{equation}\label{29}
    \begin{aligned}
     {AoM}(\mathbf{o}) \geq\frac{1}{k} \sum_{d \in L} \delta(x_{r_d}, o_2) \geq \frac{\beta }{2} \cdot \delta(o_2, w_1).
\end{aligned}
\end{equation}
Combining (\ref{28}) and (\ref{29}), we obtain  
\begin{equation*}
    \begin{aligned}
      {AoM}(\mathbf{w})\le AoM(\mathbf{o})+(1-\beta)\cdot\delta(o_2,w_1)\le \left(1+\frac{2(1-\beta)}{\beta}\right)\cdot AoM(\mathbf{o}).
\end{aligned}
\end{equation*}

\medskip
\noindent
\textbf{Case 4}: \( w_1 < w_2 = o_1 \le o_2 \).  Similar to Case 3, we obtain ${AoM}(\mathbf{w})\le \left(1+\frac{2(1-\beta)}{\beta}\right)\cdot AoM(\mathbf{o})$.

\medskip
\noindent
\textbf{Case 5}: \( o_1 < w_1 \le w_2 \le o_2 \).  Using the triangle inequality again, we have
\begin{equation*}
    \begin{aligned}
      AoM(\mathbf{w})& =\frac{1}{k} \sum_{d \in D} \delta(x_{i_d}, w(x_{i_d}))\leq AoM(\mathbf{o}) + \delta(o_1, o_2). 
\end{aligned}
\end{equation*}
Clearly, $AoM(\mathbf{o})\ge\frac{1}{k}\sum_{d \in D} \delta(x_{i'_{d}}, o_1)$ and $AoM(\mathbf{o})\ge\frac{1}{k}\sum_{d \in D} \delta(x_{i'_{d}}, o_2)$. Using again the triangle inequality, we have
\begin{equation*}
    \begin{aligned}
AoM(\mathbf{o})&\ge\frac{1}{2k}\cdot\sum_{d \in D } [\delta(x_{i'_{d}}, o_1)+\delta(x_{i'_{d}},o_2)]\ge\frac{1}{2}\cdot\delta(o_1,o_2).
\end{aligned}
\end{equation*}
Therefore, we conclude that $AoM(\mathbf{w})\le 3\cdot AoM(\mathbf{o}).$

\medskip
\noindent
\textbf{Case 6}: \(o_1=w_1\le w_2<o_2 \). Similar to Case 5, we obtain $AoM(\mathbf{w})\le 3\cdot AoM(\mathbf{o}).$

\medskip
\noindent
Combining all cases, an upper bound of the distortion is given by
\[\max \left\{ 1 + \frac{2(1 - \beta)}{\beta}, \, 1 + \frac{2}{1 - \beta},3 \right\}=\max \left\{ 1 + \frac{2(1 - \beta)}{\beta}, \, 1 + \frac{2}{1 - \beta} \right\}.\]  
\end{proof}

By solving the equation $1 + \frac{2(1 - \beta)}{\beta} = 1 + \frac{2}{1 - \beta}$, we establish the following theorem.

\begin{theorem}\label{AoMupper}
For the Average-of-Max cost, the distortion of the  $(1,\frac{3-\sqrt{5}}{2})$-Quantile Mechanism  is at most $2+\sqrt{5}$.
\end{theorem}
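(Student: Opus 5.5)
The plan is to apply the preceding lemma on the $(1,\beta)$-Quantile Mechanism with $\beta=\frac{3-\sqrt{5}}{2}$. That lemma bounds the distortion under the Average-of-Max cost by
\[
\max\left\{ 1+\frac{2(1-\beta)}{\beta},\ 1+\frac{2}{1-\beta}\right\}.
\]
So the whole task reduces to a one-variable optimization over $\beta\in(0,1)$, followed by evaluating the bound at the chosen $\beta$.

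First, I will note the monotonicity of the two terms. The first term, $1+\frac{2(1-\beta)}{\beta}=\frac{2}{\beta}-1$, is strictly decreasing in $\beta$. The second term, $1+\frac{2}{1-\beta}$, is strictly increasing in $\beta$. Hence their maximum is minimized where they are equal, which justifies solving
\[
\frac{1-\beta}{\beta}=\frac{1}{1-\beta}, \quad\text{i.e.}\quad (1-\beta)^2=\beta,
\]
which is equivalent to $\beta^2-3\beta+1=0$. The root lying in $(0,1)$ is $\beta=\frac{3-\sqrt{5}}{2}$, matching the parameter in the statement.

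Next, I will evaluate the bound at this root. We have $1-\beta=\frac{\sqrt{5}-1}{2}$, so
\[
\frac{2}{1-\beta}=\frac{4}{\sqrt{5}-1}=\sqrt{5}+1,
\]
and the second term equals $2+\sqrt{5}$. By the defining equation the first term coincides with it. As a check, $\frac{2}{\beta}=\frac{4}{3-\sqrt{5}}=3+\sqrt{5}$, so $\frac{2}{\beta}-1=2+\sqrt{5}$. Thus both terms equal $2+\sqrt{5}$, and the lemma yields distortion at most $2+\sqrt{5}$.

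There is no real obstacle beyond the lemma itself, which is assumed. The only point to watch is that $\beta\in(0,1)$, so the lemma applies; this holds since $\frac{3-\sqrt{5}}{2}\approx 0.382$.
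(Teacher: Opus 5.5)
Your proposal is correct and follows the paper's route exactly: instantiate the preceding $(1,\beta)$-Quantile lemma at $\beta=\frac{3-\sqrt{5}}{2}$, the root in $(0,1)$ of $(1-\beta)^2=\beta$, where both terms of the maximum equal $2+\sqrt{5}$. Your arithmetic checks out. The monotonicity remark only explains why this $\beta$ is the best choice within the lemma's bound; the upper bound itself does not need it.
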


\section{Conclusions and Future Work}

In this paper, we studied a constrained distributed heterogeneous two-facility location problem, and established lower and upper bounds on the distortion of strategyproof mechanisms under four social objectives.
There are several promising directions for future work. First, it would be of great interest to close the gaps between our derived lower and upper bounds for each objective. Second, it is meaningful to extend our analysis to settings with more than two facilities, as well as to more general metric spaces beyond the real line. Third, one could explore alternative social objectives in distributed environments (such as minimax envy), and consider scenarios where agents have more general preferences over facilities (e.g., fractional preferences).

\medskip
\noindent  
\textbf{Acknowledgments.}
   This research was supported in part by the National Natural Science Foundation of China (12201590, 12171444) and Natural Science Foundation of Shandong Province (ZR2024MA031).

\medskip   
\noindent
\textbf{Disclosure of Interests.}
\begin{footnotesize}
   The authors declare that they have no conflict of interest.
\end{footnotesize}

\end{document}